\documentclass[12pt,oneside]{amsart}
\usepackage{amsthm,amssymb}
\usepackage{geometry}                
\geometry{a4paper}                   

\usepackage{mathrsfs}
\usepackage{mathtools}


\usepackage{xcolor}

 
\usepackage{hyperref}

\newcommand{\mc}[1]{\begin{bmatrix*}[c]#1\end{bmatrix*}}
\newcommand{\mr}[1]{\begin{bmatrix*}[r]#1\end{bmatrix*}}

\newtheorem{theorem}{Theorem}
\newtheorem{lemma}{Lemma}
\newtheorem{proposition}{Proposition}

\theoremstyle{definition}
\newtheorem{definition}{Definition}

\newtheorem{remark}{Remark}

\newcommand\res{\operatorname{res}}

\newcommand{\T}{\mathbf{T}}
\newcommand{\R}{\mathbf{R}}
\newcommand{\bT}{\mathbf{T}}
\newcommand{\bS}{\mathbf{S}}
\newcommand{\bH}{\mathbf{H}}
\newcommand{\bu}{\mathbf{u}}
\newcommand{\bPhi}{\mathbf{\Phi}}
\newcommand{\uu}{\mathfrak u}
\newcommand{\vv}{\mathfrak v}

\newcommand\Z{\mathbb{Z}}

\newcommand{\e}{{\rm e}}

\title{Integrability conditions for Boussinesq type systems}
\author{R. Hern\'andez Heredero}
\address{ETSIS de Telecomunicaci\'on, Universidad Polit\'ecnica de Madrid, Madrid, Spain}
\author{V. Sokolov}
\address{Higher School of Modern Mathematics MIPT,  Moscow,   Russia }
 
\date{\today\ 
   }

\begin{document}
\begin{abstract}
The symmetry approach to the classification of evolution integrable partial differential equations (see, for example~\cite{MikShaSok91}) produces an infinite series of functions, defined in terms of the right hand side, that are conserved densities of any equation having infinitely many infinitesimal symmetries. For instance, the function $\frac{\partial f}{\partial u_{x}}$ has to be a conserved density of any integrable equation of the~KdV type~$u_t=u_{xxx}+f(u,u_x)$. This fact imposes very strong conditions on the form of the function~$f$. In this paper we construct similar canonical densities for equations of the Boussinesq type. In order to do that, we write the equations as evolution systems and generalise the formal diagonalisation procedure proposed in \cite{MSY} to these systems.
\end{abstract} 

\maketitle

\section{Introduction}

Integrable systems constitute an important class of equations because they have multiple applications in mathematical physics and pure mathematics. They are present in fluid mechanics, nonlinear optics, field theory, numerical analysis, algebraic geometry, etc., and explain phenomena such as the soliton and instantons, besides appearing in spectral theory and the theory of differential and pseudo-differential operators. Its study boomed since the discovery of the inverse scattering transform in the 1960'ies, and has not stopped since then.

The main feature of integrable systems is that they possess an infinite number of non-trivial symmetries and a bihamiltonian structure that, in nonlinear systems, relates the symmetries to an infinite number of non-trivial conservation laws. 

The \emph{symmetry approach} (cf.~\cite{MikShaSok91,Olv93}) provides a powerful, algorithmic method to study integrability, defining ways to derive important objects such as recursion, symplectic or Hamiltonian operators admitted by integrable systems. The symmetry approach has been extensively applied to scalar evolution equations, such as those of KdV type, systems of evolution equations like NLS type equations, or Boussinesq type systems of equations. As the complexity of the system under study grows, for example in the number of independent or dependent variables involved, or for non-evolutionary equations, the approach has to be further developed. In this paper we have systematised and extended the symmetry approach to the study of equations of the Boussinesq type, defining the concept of regularly diagonalisable systems and providing a general formula for canonical conserved densities of certain subclass of Boussinesq type of systems. We write explicitly the first canonical conservation laws of a subclass of fourth order Boussinesq type family of equations, that we finally classify.

Non-evolution  (1+1)-dimensional PDEs of the form
\begin{equation}\label{eq:eqnev}
u_{tt}=f(u,u_1,\ldots,u_{n};u_{t},u_{1t},\ldots,u_{mt}), \qquad u_i = \frac{\partial^{i}u}{\partial x^i}, \quad u_{it}=\frac{\partial^{i+1}u}{\partial x^i\partial t}
\end{equation}
are called {\it equations of Boussinesq type}.
Such equations can be rewritten as evolution systems of two variables as 
\begin{equation}\label{eq:sysgen}
\begin{aligned}
u_t&=v,\\
v_t&=f(u,u_1,\ldots,u_{n},v,v_1,\ldots,v_{m}).
\end{aligned}
\end{equation}
In this paper we consider the class of integrable\footnote{In our paper integrability means the existence of local infinitesimal symmetries and/or conservation laws (see, for example, \cite{zibshab1, Fokas, AbeGal83}).} systems \eqref{eq:sysgen} of the form
\begin{equation}\label{eq:eqsys}
\begin{aligned}
u_t&=v,\\
v_t&=f(u,u_1,\ldots,u_{2k},v,v_1,\ldots,v_{k-1}),
\end{aligned}  \qquad k>1.
\end{equation}
The Boussinesq equation $u_{tt}=u_{4}+ 2 u_1 u_{2}$, written as the system
\begin{align*}
 u_t&=v,\\
v_t&=u_4+ 2 u_1 u_2
\end{align*}
belongs to this class.

\begin{remark}\label{rem:sdx}
A similar class of integrable systems 
\begin{equation}\label{eq:eqsysMNW}
\begin{aligned}
u_t&=v_1,\\
v_t&=u_{2k-1}+g
\end{aligned}
\end{equation} 
was studied in~\cite{MNW}.
If $f_u=0$, then setting $u_1\to\tilde{u}$ allows to write~\eqref{eq:eqsys} as
\[\begin{aligned}
\tilde{u}_t&=v_1,\\
v_t&=f(\tilde{u},\ldots,\tilde{u}_{2k-1},v,v_1,\ldots,v_{k-1}).
\end{aligned}
\]
If $f=Dg$ (where~$Dg$ will denote the total $x$-derivative of a differential function~$g$ throughtout the paper), setting~$v\to\tilde{v}_1$ leads to
\[
\begin{aligned}
u_t&=\tilde{v}_1,\\
\tilde{v}_t&=g(u,u_1,\ldots,u_{2k-1},\tilde{v}_1,\ldots,\tilde{v}_k)
\end{aligned}
\]
so the corresponding function~$g$ on~\eqref{eq:eqsysMNW} must not depend on~$v$. Thus both families of systems~\eqref{eq:eqsys} and~\eqref{eq:eqsysMNW} have an intersection but are not equivalent.
\end{remark}

In the paper~\cite{NW} integrable systems of the form
\[
\begin{aligned}
u_t&=v_1,
\\
v_t&=\alpha u_{2k-1}+\beta v_k+g(u,\ldots,u_{2k-2},v,\ldots,v_{k-1})
\end{aligned}
\]
with constant $\alpha$ and $\beta$ were considered. A polynomiality restriction on~$g$ was crucial due to the symbolic approach used in that paper.  A diagonalisation of the linear part (separant) of the system, in symbolic representation, was used to perform a complete classification of homogeneous polynomial systems of that type
with $k=2,3,5$. Some non-polynomial examples of equations~\eqref{eq:eqnev} with $k=3$ were found in~\cite{HSS}.

The symmetry approach \cite{MikShaSok91,ibshab,sokshab} to the classification of integrable scalar evolution equations 
\begin{equation}\label{evsc}
u_t = \phi(u,u_1, \dots , u_n), \qquad u_i = \frac{\partial^i u}{\partial x^i}
\end{equation}
is based on the existence of a formal pseudo-differential recursion operator of the form 
\begin{equation}\label{RR}
R=r_1 D+r_0+r_{-1} D^{-1}+\cdots, \qquad r_i=r_i(u, u_1,\dots)
\end{equation}
satisfying, by definition, the following operator relation 
\begin{equation}\label{eq:mrec}
R_t=[\phi_*,\,R].
\end{equation}
Here 
\[
\phi_*= \sum_{i=0}^n \frac{\partial \phi}{\partial u_i} D^i
\]
is the Fr\'echet derivative of~$\phi$, $D$ is the total $x$-derivative, and the $t$-derivative in~\eqref{eq:mrec} is found by virtue of \eqref{evsc}. A formal recursion operator \eqref{RR} produces \cite{sokshab} an infinite sequence of so-called local {\it canonical conserved densities}
\begin{equation}\label{rhos}
\rho_i={\rm res}\, R^i, \quad i=-1,1,2,\ldots,\qquad\rho_0={\rm res}\,\log(R).
\end{equation}
Requiring locality of the canonical conserved densities leads to efficient necessary integrability conditions upon the right hand side $\phi$ (cf. \cite{CheLeeLiu79}).

Equations~\eqref{evsc} that have an infinite sequence of local symmetries of the form 
\begin{equation}\label{symevsc}
u_{\tau} = \psi(u,u_1, \dots , u_i) 
\end{equation}
possess a formal recursion operator \cite{ibshab,sokshab}. 

If the equation has local conservation laws then, besides formal recursion operators, it admits~\cite{soksvin1} a {\it formal symplectic operator} $S$ such that 
\begin {equation} \label {SSsc}
S_t + S \, \phi_* +\phi_*^ {+} \, S = 0 \,,
\end {equation}
where $^+$ denotes formal operator conjugation. The existence of a formal symplectic operator~$S$ implies that the density $\rho_{2i}$ is trivial for any~$i$.

The series $H=S^{-1}$ satisfies the relation
\begin {equation} \label {HHsc}
H_t - H\phi_*^ {+} -  \phi_*\,H   = 0 \,,
\end {equation}
and is called {\it formal Hamiltonian operator}.

\begin{remark}\label{rem2} Suppose that both formal recursion and symplectic operators $R$ and~$S$ exist. Since $S^{+}$ and $S R^i$ are also formal symplectic operators for any $i\in \Z$, we may assume \cite{sokshab} that $R$ and $S$ are first order series such that
\begin{equation}\label{eq:specr}
R^+=- S\, R\, S^{-1},\qquad {S}^+=-S.
\end{equation}
Since ${\rm res}\,X = -{\rm res}\,X^{+}$ and ${\rm res}\, (Q X Q^{-1}-X) \in {\rm Im}\,D$ for any series $X,Q$, it follows from \eqref{eq:specr} that ${\rm res}\,R^{2i}\in {\rm Im}\,D$ for any $i$.
\end{remark}

\begin{definition}\label{defi1} Equations \eqref{evsc} that have both formal recursion and symplectic operators are called {\it S-integrable}.
\end{definition}
 
In the case of general multi-component evolution systems 
\begin{equation}\label{evsys}
{\bf u}_t = \mathbf{\Phi}({\bf u},{\bf u}_1, \dots , {\bf u}_n),
\end{equation}
where ${\bf u}=(u^1, \dots , u^m)$, and~$\bPhi=(\phi^1,\ldots,\phi^m)$, the Fr\'echet derivative~$\bPhi_*$ is a matrix of differential operators
\begin{equation}\label{eq:frech}
\bPhi_*[i,j]=\sum_{l=0}^n\frac{\partial\phi^i}{\partial u_l^j}D^l=\mathbf{\Sigma}\,D^n+\sum_{l=0}^{n-1}\mathbf{\Phi}^{(l)}D^l,
\end{equation}
where~$\mathbf{\Sigma}$ is the so called \emph{separant} of the evolution system.
Formal recursion~$\R$, symplectic~$\bS$ and Hamiltonian~$\bH$ operators are matrix pseudo-differential series satisfying the equations
\begin{gather}\label{RR2} \R_t=[\mathbf{\Phi}_*,\,\R], \\
\label{SS}  \bS_t + \bS \, \mathbf{\Phi}_* +\mathbf{\Phi}_*^ {+} \, \bS = 0,\\
\label{HH} \bH_t-\bH\mathbf{\Phi}_*^+-\mathbf{\Phi}_*\bH=0
\end{gather}
where~$^+$ denotes matrix transposition followed by the formal operator conjugation of entries. 
\begin{definition}\label{defi2}
A system \eqref{evsys} is called {\it non-degenerate} if its separant matrix $\mathbf{\Sigma}$ is invertible and has no multiple eigenvalues at a generic point.
\end{definition}
The symmetry approach was generalised to the case of non-degenerate systems in~\cite{MSY}. To obtain canonical conserved densities, a diagonalisation procedure was used. This allows to split the matrix relation \eqref{RR2} into~$m$ scalar relations similar to \eqref{eq:mrec} and to apply formula \eqref{rhos}. 
Unfortunately, many of the known integrable evolution systems do not satisfy the non-degeneracy condition. In particular, systems~\eqref{eq:sysgen} are not non-degenerate.

In this paper we generalise the symmetry approach to a class of systems \eqref{eq:eqsys}.
A fundamental point in our work is that for systems~\eqref{eq:eqsys}, albeit being degenerate and non-polynomial, it is possible to perform a full diagonalisation of the Fréchet derivative in the jet space. As a result, a formal recursion operator also reduces to the diagonal form. This allows us to generate explicit integrability conditions in the form of canonical conservation laws. Using these conditions we perform a partial classification of systems of the form
\begin{equation}\label{order4}
\begin{aligned}
u_t&=v,\\
v_t&=u_4+f(u,u_1,\ldots,u_{3},v,v_1)
\end{aligned}
\end{equation}
without the polinomiality restriction for $f$. Most of the obtained equations are not polynomial.

This paper is organised as follows. Section~\ref{sec:2} is devoted to the diagonalisation procedure of systems~\eqref{eq:eqsys}. In Subsection~\ref{subsec:nondeg} we recall results from~\cite{MSY} related to the diagonalisation procedure for non-degenerate systems. 
In Subsection~\ref{subsec:regdiag} we introduce the more general concept of regularly diagonalisable systems. We formulate all statements about such systems without proofs, because they can be proved exactly in the same way as the corresponding statements for the non-degenerate case (cf.~\cite{MSY}).

Section~\ref{sec:3} is devoted to the integrability conditions for systems \eqref{eq:eqsys}. In Section~\ref{sec:4} we present, in the case $k=2$, explicit formulas for several first coefficients of relevant matrix pseudo-differential series appearing in the diagonalisation, as well as the simplest canonical densities.

In Section~\ref{sec:5} we show that any $S$-integrable system \eqref{order4} has the form
\begin{equation}\label{eq:sys41a}
\begin{aligned}
u_t&=v,\\
v_t&=u_4+g_{u_2}u_3^2+2g_{v}v_1u_3+2\left(g_{u_1}u_2+g_uu_1\right)u_3+f_2 v_1^2+f_1 v_1+ f_0,
\end{aligned}
\end{equation}
where~$g=g(u,u_1,u_2,v)$ and $f_i=f_i(u,u_1,u_2,v)$.

In Section~\ref{sec:6} we find all $S$-integrable systems \eqref{eq:sys41a}  with $g=g(u)$ while Section~\ref{sec:7} is devoted to classification of integrable systems with $g=g(u_1)$.
In these sections some finite lists of systems are found. 

Since canonical densities provide only necessary integrability conditions, actual integrability for obtained systems should be justified independently.   One of the possible ways to do that is to construct a compact form for the formal recursion operators, whose first coefficients we initially found in Sections~\ref{sec:5} and \ref{sec:6}. The main tool here is using quasi-local anzatzes for recursion, Hamiltonian and symplectic operators (see \cite{sokkn,malnov,demsok,wang}). Written in this form, a formal recursion operator is a usual recursion operator, and it can be applied to some simple ``seed'' symmetries to generate the hierarchy of infinitesimal symmetries. In Section~\ref{sec:8} we demonstrate this technique applied to the systems found in Section~\ref{sec:5}.

\section{diagonalisation}\label{sec:2}

\subsection{Non-degenerate systems}\label{subsec:nondeg}

For non-degenerate systems~\eqref{evsys} (see Definition~\ref{defi2}), integrability conditions can be obtained through the diagonalisation of the operator $D_t -\mathbf{\Phi}_{*}$, which corresponds to the linearisation of system~\eqref{evsys} around an arbitrary solution~${\bf u}$. Namely, one can show \cite[Proposition 2.1]{MSY} that there exists a matrix pseudo-differential series
\begin{equation}\label{eq:regT}
\T=\T_0+\sum_{l>0}\T_{-l}D^{-l},
\end{equation}
where~$\T_i$ are $m\times m$ matrices depending on $\bu$ and its derivatives, 
such that
\begin{equation}\label{Phhi}
\T^{-1} (D_t -\mathbf{\Phi}_{*}) \T = D_t - 
  \overline{\mathbf{\Phi}}_{*},
\end{equation}
where $ \overline{\mathbf{\Phi}}_{*}=\bT^{-1}\bPhi_*\bT-\bT^{-1}\bT_t$ is a diagonal matrix pseudo-differential series
\begin{equation}\label{Phhi1}
\overline{\mathbf{\Phi}}_{*}={\rm diag}(\Phi_1,\dots,\Phi_m),\qquad \Phi_i=\sum_{l=-n}^{\infty} p_{-l,i} D^{-l},
\end{equation}
such that $p_{n,i}\ne p_{n,j}$ for $i\ne j$ and $ p_{n,i}\ne 0$ for all possible~$i$. In~\eqref{eq:regT}~$\T_0$ is a matrix that diagonalises the separant matrix $\mathbf{\Sigma}$ of system~\eqref{evsys}:
\begin{equation}\label{pp}
\boldsymbol{\Lambda}=\T_0^{-1}\, \mathbf{\Sigma}\, \T_0 =  {\rm diag}(p_{n,1},\dots,p_{n,m}),
\end{equation}
and
\begin{equation}
\label{eq:phib}
 \overline{\mathbf{\Phi}}_* = \boldsymbol{\Lambda}D^n + \sum_{l=-n+1}^{\infty}\overline{\mathbf{\Phi}}^{(-l)}D^{-l}.
\end{equation}
After diagonalisation, formal recursion and symplectic operators satisfy the equations 
\begin{gather}\label{eq:gsymp1}
\overline{\mathbf{R}}_{t}=[\overline{\bPhi}_*,\overline{\mathbf{R}}] , \\
\label{eq:gsymp2}
\overline{\bS}_t + \overline{\bS} \, \overline{\bPhi}_* +\overline{\bPhi}_*^ {+} \, \overline{\bS} = 0,
\end{gather} 
where
\[\overline{\mathbf{R}} =\T^{-1} \mathbf{R} \T, \qquad  \overline{\mathbf{S}} =\T^ {+} \mathbf{S} \T.
\]

The following result can be easily deduced (see~\cite[Theorem 2.1]{MSY}) from~\eqref{eq:gsymp1}:
\begin{proposition} \label{prop1} If a non-degenerate system \eqref{evsys} possesses a formal recursion operator~$\R$,  then the operator $\overline{\mathbf{R}}$ is diagonal:
\begin{equation}\label{Rbar}\overline{\mathbf{R}}={\rm diag}(R_1,\dots,R_m),\qquad R_i=\sum_{l=s_i}^{\infty} r_{-l,i} D^{-l}, \qquad r_{-s_i,i}\neq0,\ s_i\in \mathbb{Z}.
\end{equation}
\end{proposition}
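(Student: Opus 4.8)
The plan is to argue directly from the conjugated equation \eqref{eq:gsymp1}, namely $\overline{\mathbf{R}}_t=[\overline{\bPhi}_*,\overline{\mathbf{R}}]$, exploiting that $\overline{\bPhi}_*=\mathrm{diag}(\Phi_1,\dots,\Phi_m)$ is diagonal with leading term $\boldsymbol{\Lambda}D^n$, where $\boldsymbol{\Lambda}=\mathrm{diag}(p_{n,1},\dots,p_{n,m})$ has pairwise distinct nonzero entries. Writing a formal recursion operator as $\overline{\mathbf{R}}=\sum_{j\ge0}b_j\,D^{\,s-j}$ with $s\in\Z$ its order, $b_0\ne0$, and each $b_j$ an $m\times m$ matrix of differential functions of $\bu$, I would prove by induction on $j$ that every $b_j$ is diagonal. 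Since $\overline{\mathbf{R}}_t$ acts only on the entries of the $b_j$ (through $\bu_t=\bPhi$), it has order at most $s$; this is the key fact that keeps the left-hand side from interfering with the top coefficients of the commutator. Once all $b_j$ are diagonal we have $\overline{\mathbf{R}}=\mathrm{diag}(R_1,\dots,R_m)$, and each diagonal entry is automatically a scalar pseudo-differential series of the form \eqref{Rbar}.

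The induction rests on two elementary facts. First, for any matrix $X$ the commutator $[\boldsymbol{\Lambda},X]$ has vanishing diagonal, and $[\boldsymbol{\Lambda},X]=0$ forces $X$ diagonal because the $p_{n,i}$ are distinct. Second, the composition of two pseudo-differential series all of whose matrix coefficients are diagonal again has all coefficients diagonal, since any $x$-derivative of a diagonal matrix function is diagonal and products of diagonal matrices are diagonal. For the base case, the coefficient of $D^{\,n+s}$ in \eqref{eq:gsymp1} is produced entirely by $[\boldsymbol{\Lambda}D^n,\,b_0D^{\,s}]$ on the right, the left side having order $\le s<n+s$; that coefficient equals $[\boldsymbol{\Lambda},b_0]$, so $[\boldsymbol{\Lambda},b_0]=0$ and $b_0$ is diagonal.

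For the inductive step, assume $b_0,\dots,b_{j-1}$ are diagonal and read off the coefficient of $D^{\,n+s-j}$ in \eqref{eq:gsymp1}. On the right, the only place where $b_j$ enters at this order is the top part $[\boldsymbol{\Lambda},b_j]$ of $[\boldsymbol{\Lambda}D^n,\,b_jD^{\,s-j}]$: the subleading part of $\overline{\bPhi}_*$ begins at $D^{\,n-1}$, so pairing it with $b_jD^{\,s-j}$ reaches only order $n+s-j-1$, and any $b_i$ with $i>j$ occurs at order below $n+s-j$. All other contributions at order $n+s-j$ involve only $\boldsymbol{\Lambda}$, the diagonal coefficients $\overline{\bPhi}^{(-l)}$, and the $b_i$ with $i<j$, hence are diagonal by the inductive hypothesis and the second elementary fact. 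The left side contributes $(b_{\,j-n})_t$ when $j\ge n$ and $0$ otherwise; since $n\ge1$, $b_{j-n}$ is already known to be diagonal, so the left side is diagonal too. Taking the off-diagonal part of the equation at order $n+s-j$ then gives $[\boldsymbol{\Lambda},b_j]=0$, so $b_j$ is diagonal, completing the induction and hence the proof.

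The argument is almost entirely bookkeeping, so the only real point of care—and what I would regard as the main obstacle—is the order count: one must verify once and for all that $\overline{\mathbf{R}}_t$ (order $\le s$) cannot affect the leading coefficient of the commutator, that the subleading tail of $\overline{\bPhi}_*$ composed with the as-yet-unknown $b_j$ never reaches order $n+s-j$, and that coefficients $b_i$ with $i>j$ never do either, so that the off-diagonal part of \eqref{eq:gsymp1} at each order isolates exactly $[\boldsymbol{\Lambda},b_j]$. After diagonality is established, restricting \eqref{eq:gsymp1} to the diagonal yields the scalar relations $(R_i)_t=[\Phi_i,R_i]$, which is what makes formula \eqref{rhos} applicable componentwise, though that consequence is not needed for the statement itself.
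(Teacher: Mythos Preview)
Your argument is correct and is precisely the standard induction on the order of coefficients that the paper alludes to when it says the result ``can be easily deduced from~\eqref{eq:gsymp1}'' and refers to \cite[Theorem~2.1]{MSY}; the paper does not spell out a proof of its own, so your write-up supplies exactly the details that reference contains. The order-counting you flag as the main point of care is handled correctly: $\overline{\mathbf{R}}_t$ has order $\le s$, and at order $n+s-j$ the only off-diagonal contribution is $[\boldsymbol{\Lambda},b_j]$, which then vanishes.
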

\noindent As both $\overline{\mathbf{\Phi}}_{*}$ and $\overline{\mathbf{R}}$ are diagonal, each of the series $R_i$ satisfies a scalar relation
\begin{equation}\label{drecop}
     (R_i)_t=[ \Phi_i,\,R_i].
\end{equation}
We will call a formal recursion operator~$\mathbf{R}$ \emph{non-degenerate} if in formula \eqref{Rbar} we have $s_i\ne 0$ for all $i$.  Without loss of generality one may assume that in the non-degenerate case the numbers $s_i$ are equal to $-1$. Just as in the scalar case, the functions $\rho_{ij}= {\res}\,R_i^j$ are local conserved densities for~\eqref{evsys}.

The existence a non-degenerate formal recursion operator can be chosen as an integrability criterion for a non-degenerate evolution system~\eqref{evsys}. 
Without imposing the nondegeneracy condition to $\mathbf{R}$, the following problem arises. Suppose that the system consists of two uncoupled systems, one  integrable and the other non-integrable. Then the whole system is not integrable but nevertheless it has a degenerate formal recursion operator.  

Similar problems arise on the level of symmetries or conservation laws: for multi-component systems it is not enough for integrability to have one infinite sequence of symmetries or conservation laws. These sequences must be non-degenerate in some sense. 

A symmetry 
\begin{equation}\label{evsyssym}
{\bf u}_{\tau} = \boldsymbol{\psi}({\bf u},{\bf u}_1, \dots , {\bf u}_k)
\end{equation}
is called \emph{non-singular} if the separant matrix of $\boldsymbol{\psi}$ is non-singular~\cite[p.~10]{MSY}. It can be shown that if the system has an infinite sequence of non-singular symmetries then it possesses a non-degenerate formal recursion operator.

The requirement for nonsingularity of symmetries can be weakened. For instance,  the following statement can be formulated using the diagonalisation procedure. 
It is easy to see that the conjugation by $\T$ diagonalises not only $\mathbf{\Phi}_{*}$ and $\mathbf{R}$, but also the Fr\'echet derivative $\mathbf{\Psi}_{*}$ of  any symmetry
of the system \eqref{evsys}. Namely, we have 
\[
\T^{-1} (D_{\tau} -\mathbf{\Psi}_{*}) \T = D_{\tau} - 
  \overline{\mathbf{\Psi}}_{*},
\]
where $ \overline{\mathbf{\Psi}}_{*}$ is diagonal: 
\[
 \overline{\mathbf{\Psi}}_{*}={\rm diag}(\Psi_1,\dots,\Psi_m),\qquad \Psi_i=\sum_{l=-d_{i}}^{\infty} p_{-l,i} D^{-l},\qquad  p_{d_i,i}\neq0.
\]
The sequence of symmetries 
$$
\mathbf{u}_{\tau_{j}} = \boldsymbol{\psi}_{j},  \qquad j\to \infty
$$
is called {\it non-degenerate} if  $ (d_k)_j \to \infty$ for any $1\le k \le m$.

The following statement can be proved in the same way as Theorem 1.7 from~\cite{sokshab}:

\begin{proposition} \label{prop2} Suppose that a non-degenerate system~\eqref{evsys} has a non-degenerate sequence of symmetries. Then the system has a non-degenerate formal recursion operator.
\end{proposition}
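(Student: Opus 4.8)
\emph{Proof plan.} The plan is to reproduce the scheme of \cite[Theorem~1.7]{sokshab}, carried out in the diagonalised picture furnished by the transformation~$\T$ of \cite[Proposition~2.1]{MSY} recalled in Subsection~\ref{subsec:nondeg}. First I would invoke that, the system~\eqref{evsys} being non-degenerate, such a series~$\T$ exists and conjugates $D_t-\mathbf{\Phi}_*$ into the diagonal operator $D_t-\overline{\mathbf{\Phi}}_*$ of \eqref{Phhi}--\eqref{eq:phib}; and, as recorded in the text preceding the statement, the same~$\T$ simultaneously diagonalises the Fr\'echet derivative of every member of the symmetry sequence, so for each~$j$ one obtains scalar series $\Psi_1^{(j)},\dots,\Psi_m^{(j)}$ of orders $(d_i)_j$ with non-zero leading coefficients, where $(d_i)_j\to\infty$ for each fixed~$i$ by the non-degeneracy hypothesis on the sequence. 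Expanding $[D_t-\Phi_i,\,D_{\tau_j}-\Psi_i^{(j)}]=0$ in each diagonal entry yields the scalar relation
\[
(\Psi_i^{(j)})_t=[\Phi_i,\Psi_i^{(j)}]+(\Phi_i)_{\tau_j},
\]
whose inhomogeneous term has order at most $n=\mathrm{ord}\,\Phi_i$; thus, away from the leading orders, each $\Psi_i^{(j)}$ is an approximate solution of the scalar recursion equation~\eqref{drecop}.

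The core of the argument is to build, for each fixed~$i$, a genuine solution $R_i=r_{1,i}D+r_{0,i}+r_{-1,i}D^{-1}+\cdots$ of~\eqref{drecop} of order exactly~$1$ with $r_{1,i}\neq0$. Writing $R_i$ with undetermined coefficients and equating the coefficients of $D^{n},D^{n-1},\dots$ in $(R_i)_t-[\Phi_i,R_i]$ determines $r_{1,i},r_{0,i},r_{-1,i},\dots$ successively, the $k$-th step being solvable precisely when an obstruction built only from~$\Phi_i$ and the previously found coefficients vanishes. To see that it does, I would fix~$k$, choose~$j$ with $(d_i)_j$ large, and form the formal $(d_i)_j$-th root of~$\Psi_i^{(j)}$: this root is well defined, of order~$1$, and by the displayed relation it solves $(R)_t=[\Phi_i,R]$ up to terms of order at most $n-(d_i)_j$, so its first $k+1$ coefficients satisfy the recursion through step~$k$ and the $k$-th obstruction is forced to vanish. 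Since $k$ is arbitrary, $R_i$ is obtained as a full formal series of order~$1$, with $r_{1,i}$ a root of the non-zero leading coefficient of~$\Phi_i$.

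Finally I would set $\overline{\mathbf{R}}=\mathrm{diag}(R_1,\dots,R_m)$ and $\mathbf{R}=\T\,\overline{\mathbf{R}}\,\T^{-1}$; then $\overline{\mathbf{R}}$ satisfies~\eqref{eq:gsymp1}, hence $\mathbf{R}$ satisfies~\eqref{RR2} and is a formal recursion operator, and in the notation of~\eqref{Rbar} every $s_i=-1\ne0$, so it is non-degenerate. I expect the main obstacle to be exactly the middle step: controlling the inhomogeneity $(\Phi_i)_{\tau_j}$ and verifying that the formal root of a symmetry solves the recursion equation to the claimed order. This is where the \emph{unboundedness} of the orders $(d_i)_j$ --- not the mere existence of one high-order symmetry --- is indispensable, and it is the part that reproduces the technical content of \cite[Theorem~1.7]{sokshab}. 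A secondary, bookkeeping issue is to make precise the sense in which the truncations coming from different symmetries agree, i.e. to fix enough of the normalisation freedom in $R_i$ so that the construction stabilises as $j\to\infty$.
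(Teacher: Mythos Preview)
Your proposal is correct and follows exactly the route the paper indicates: the paper does not supply a proof of Proposition~\ref{prop2} but states that it ``can be proved in the same way as Theorem~1.7 from~\cite{sokshab}'', and what you outline---diagonalise via~$\T$, use the commutation $[D_t-\Phi_i,\,D_{\tau_j}-\Psi_i^{(j)}]=0$ on each diagonal entry, and take formal roots of the $\Psi_i^{(j)}$ to kill the obstructions in the step-by-step construction of a first-order $R_i$---is precisely that argument transported to the diagonalised picture. Your identification of the two delicate points (controlling the inhomogeneity $(\Phi_i)_{\tau_j}$ and stabilising the truncations) is also accurate.
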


According to Proposition \ref{prop1}, any formal recursion operator becomes diagonal when the Fréchet derivative of the system is already diagonalised. 

The form of symplectic operators of non-degenerate systems~\eqref{evsys} depends on the order~$n$ of the series $\bPhi_{*}$ and the number of equations~$m$ in system \eqref{evsys} (see~\cite{MSY}). Write the series~$\overline{\mathbf{S}}$ in \eqref{eq:gsymp2} as~$\overline{\bS}=\sum_{i=-l}^\infty\overline{\bS}^{(-i)}D^{-i}$. The coefficient of the highest power of $D$, $D^{l+n}$, in~\eqref{eq:gsymp2} implies that
\begin{equation}\label{leadS}(-1)^n\boldsymbol{\Lambda}\overline{\mathbf{S}}^{(l)}+\overline{\mathbf{S}}^{(l)}\boldsymbol{\Lambda}=0,
\end{equation}
where $\boldsymbol{\Lambda}$ is defined by ~\eqref{pp}.  
The equation~\eqref{eq:gsymp2} can be expanded as
\begin{multline*}
\sum_{k\geq0}\left[(-1)^{n}\boldsymbol{\Lambda}\overline{\mathbf{S}}^{(l-k)}+\overline{\mathbf{S}}^{(l-k)}\boldsymbol{\Lambda}\right]D^{l+n-k}+\sum_{i\geq0}\overline{\mathbf{S}}^{(l-i)}_{t}D^{l-i}
\\
+\sum_{k\geq1}\sum_{i=0}^{k-1}\sum_{j=0}^{k-i}\Bigg{[}(-1)^{n-j}\binom{n-j}{k-j-i}D^{k-i-j}\left(\overline{\mathbf{\Phi}}^{(n-j)}\overline{\mathbf{S}}^{(l-i)}\right)
\\
{}+\overline{\mathbf{S}}^{(l-i)}\binom{l-i}{k-j-i}D^{k-j-i}\left(\overline{\mathbf{\Phi}}^{(n-j)}\right)\Bigg]D^{l+n-k}=0.
\end{multline*}
We see that equations from lower coefficients in~\eqref{eq:gsymp2} are of the form
\begin{equation}\label{moreS}(-1)^{n}\boldsymbol{\Lambda}\overline{\mathbf{S}}^{(l-k)}+\overline{\mathbf{S}}^{(l-k)}\boldsymbol{\Lambda}+\boldsymbol{\Delta}_k=0
\end{equation}
where $\boldsymbol{\Delta}_k$ depends on previously found~$\overline{\mathbf{S}}^{(i)}$'s and on~$\overline{\mathbf{\Phi}}^{(i)}$. The terms in~$\boldsymbol{\Delta}_k$ are products~$D^i\overline{\mathbf{\Phi}}^{(j)}\cdot D^r\overline{\mathbf{S}}^{(s)}$ or~$\overline{\mathbf{S}}^{(i)}$, with factors $\overline{\mathbf{S}}^{(i)}$ previously found.
\qquad \begin{enumerate}
    \item \textbf{If~$n$ is odd,} \eqref{leadS} implies that $\overline{\mathbf S}^{(l)}$ is diagonal. Inductively, $\boldsymbol{\Delta}_k$ is also diagonal, which implies that, $\overline{\mathbf{S}}^{(l-k)}$ must be diagonal, and an integrability condition~$\boldsymbol{\Delta}_k=0$.
Thus we can write
\begin{equation}\label{eq:diagS}
    \overline{\mathbf{S}}={\rm diag}(S_1,\dots,S_{m} ),\qquad S_i=\sum_{j=-k_{i}}^{\infty} s_{-j,i} D^{-j}.
\end{equation}
We will call a formal symplectic operator~$\overline{\mathbf{S}}$ in formula \eqref{eq:diagS} \emph{non-degenerate} if  $k_i\ne 0$ for all $i$.   
For systems that have both non-degenerate formal recursion and symplectic operators  we may assume, as in the scalar case, that $k_{i}=1$ and that
\begin{equation}\label{eq:specrv}
\overline{\R}^+=-\overline{\bS}\,\overline{\R}\,\overline{\bS}^{-1},\qquad \overline{\bS}^+=-\overline{\bS}.
\end{equation}
\item 
  \textbf{If $n$ is even,} $n=2k$, it follows from ~\eqref{leadS} that $(\overline{\mathbf{S}}_{l})^i_j \ne 0$ only if $p_{n,i}+p_{n,j}=0$. Rearrange the entries of~$\overline{\mathbf{\Psi}}_{*}$ to obtain a matrix~$\boldsymbol{\Lambda}$ of the form
\begin{equation}\label{newpp}
\operatorname{diag}(p_{n,1},-p_{n,1}, \cdots, p_{n,s}, - p_{n, s}, p_{n, 2s+1}, \cdots, p_{n, 2k}),
\end{equation} 
where $p_{n,i}+p_{n,j}\ne 0$ for $i,j>2s$. Formulas \eqref{moreS} then imply that $\overline{\mathbf{S}}$ has the $2\times 2$ block form
\[\overline{\mathbf{S}}={\rm diag}(\mathbf{S}_1,\dots,\mathbf{S}_s, \mathbf{0}, \dots, \mathbf{0}), \qquad \mathbf{S}_i=\mc{ 0  & \sigma_i\\
\tau_i & 0}
\]
with $\sigma_i, \tau_i$ being scalar pseudodifferential series. If the system has a non-degenerate formal recursion operator, then each non-constant block $\mathbf{S}_i$ can be reduced (see Remark \ref{rem2}) to 
\begin{equation}\label{symcan}
\mathbf{S}_i=\begin{bmatrix} 0  & \sigma_i\\
-\sigma_i^+ & 0
\end{bmatrix}, \qquad \sigma_i=\sum_{j=-1}^{\infty} s_{-j,i} D^{-j}.
\end{equation}
A formal symplectic operator is called {\it non-degenerate} if it has the form 
\[\overline{\mathbf{S}}={\rm diag}(\mathbf{S}_1,\dots,\mathbf{S}_k), 
\]
where the blocks $\mathbf{S}_i$ are of the form~\eqref{symcan}. In this case, if~$p$ is an eigenvalue of~$\boldsymbol{\Lambda}$, then~$-p$ is also an eigenvalue. 
\end{enumerate}

\begin{definition}\label{defi3}  Non-degenerate systems~\eqref{evsys} that have both non-degenerate formal recursion and symplectic operators are called {\it S-integrable}.
\end{definition} 

\begin{remark}\label{rem3}
The existence of an infinite sequence of non-singular\footnote{This means that the leading coefficient of the differential operator $(\delta\rho/\delta\mathbf{u})_{*}$ is non-singular~\cite[p. 10]{MSY}.} conservation laws  
\[D_t\mathbf{\rho}=D\sigma 
\]
implies (cf. \cite{sokshab}) the existence of both non-degenerate formal recursion and symplectic operators.
\end{remark}

\subsection{Regularly diagonalisable systems}\label{subsec:regdiag}

A similar diagonalisation is possible for some degenerate systems \eqref{evsys}. Namely, it can happen that there exists a pseudo-differential matrix series $\T$ such that the series~$\overline{\mathbf{\Phi}}_{*}$ in~\eqref{Phhi} takes the form
\[
\overline{\mathbf{\Phi}}_{*}={\rm diag}(\Phi_1,\dots,\Phi_m),\qquad \Phi_i=\sum_{l=-k}^{\infty} p_{-l,i} D^{-l},
\]
where $1<k<n$, $p_{k,i}\ne p_{k,j}$ for $i\ne j$ and $ p_{k,i}\ne 0$ for all~$i$\footnote{Most statements can be easily generalised to the case when one of the $p_{k,i}$ is equal to zero.}. In this case we say that~\eqref{evsys} admits a~\emph{regular diagonalisation of order} $k$. In \cite[Proposition 2.1]{MSY} it was shown that non-degenerate systems~\eqref{evsys} are regularly diagonalisable of order~$n$. 

Our main observation is that systems \eqref{eq:eqsys} are regularly diagonalisable. 

All the statements from \cite{MSY} about non-degenerate systems reviewed in the previous Section, can be easily generalised to the case of systems admitting a regular diagonalisation. For such systems, Fr\'echet derivatives of symmetries and formal recursion operators become diagonal simultaneously with the Fr\'echet derivative of the system. The definition of the order of a symmetry is the same to that of non-degenerate systems. The canonical form of a formal symplectic operator depends on whether the number~$k$ is even or odd.

\textbf{Recursion and symplectic operators.}
Following the line in \cite[Theorem 2.1]{MSY} it can be proved that, for regularly diagonalisable $S$-integrable systems (see Definition \ref{defi1}) and without loss of generality,  formal recursion and symplectic operators~$\overline{\R}$ and~$\overline{\bS}$ can be assumed to have order~$1$ and be related by the relations~\eqref{eq:specrv}
\[
\overline{\R}^+=-\overline{\bS}\,\overline{\R}\,\overline{\bS}^{-1},\qquad \overline{\bS}^+=-\overline{\bS}.
\]
The matrix series $\overline{\R}$ is always diagonal for regularly diagonalisable systems. The series $\overline{\bS}$ is diagonal if $k$ is odd  and block-diagonal if~$k$ is even, similarly to the non-degenerate case described in Section~\ref{subsec:nondeg}.

\textbf{Integrability conditions.}  After diagonalisation, relation \eqref{RR2} splits into~$m$ scalar equations \eqref{drecop}. These equations define scalar {\it canonical conservation laws}
\begin{equation}\label{eq:rhos}
(\rho_{i,j})_t = (\sigma_{i,j})_x, \qquad i=1,\dots, m, \quad j=-1,0,1,2,\dots ,
\end{equation}
where
\[ 
 \rho_{i,j}={\rm res}\, R_i^j \quad {\rm for}\,\,j\ne 0 \qquad {\rm and} \quad  \rho_{i,0}={\rm res} \, {\rm log} (R_i).
\]
As in the scalar case (see Remark \ref{rem2}), some linear combinations of canonical conserved densities have to be trivial (i.e.~they are total $x$-derivatives). In the next section we describe in detail these integrability conditions in the case of systems~\eqref{eq:eqsys}.

\section{Systems of the Boussinesq type}\label{sec:3}

Consider systems of the form~\eqref{eq:eqsys}.
Their Fréchet derivative and separant are
\begin{equation}\label{eq:sepsys}
{\mathbf{\Phi}}_{*}=\mc{0 & 1\\
U & V},\qquad\mathbf{\Sigma}=\mc{0 & 0\\
f_{u_{2k}} & 0},
\end{equation}
where, denoting~$f_{u_i}=\frac{\partial f}{\partial u_i},$ $f_{v_i}=\frac{\partial f}{\partial v_i}$
\[
U\stackrel{\text{def}} {=}f_{u_{2k}}D^{2k}+\cdots+f_{u}, \qquad V \stackrel{\text{def}} {=} f_{v_{k-1}}D^{k-1}+\cdots+f_{v}.
\] 

\begin{theorem}\label{thm:diagF}
For any system \eqref{eq:eqsys} with $k>1$  
there exists a matrix pseudo-differential series~ $\T$ of the form
\begin{equation}\label{eq:formT}
\mathbf{T}=\mc{1 & T_{2}^{-1}\\
T_{1} & 1},
\end{equation}
being $T_1$ and $T_2$ scalar pseudo-differential series of
order~$k$,  such that 
\[
\T^{-1} (D_t -\mathbf{\Phi}_{*}) \T = D_t - 
  \overline{\mathbf{\Phi}}_{*},  
\]
where
\begin{equation}\label{eq:phibar}
\overline{\bPhi}_{*}
=\begin{bmatrix} f_{u_{2k}}^{1/2} D^{k}+\cdots  & 0\\
0 & -f_{u_{2k}}^{1/2}D^{k} +\cdots
\end{bmatrix}.
\end{equation}
\end{theorem}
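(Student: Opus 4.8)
The plan is to reduce the conjugation identity to a scalar operator Riccati equation and then solve that by descending induction on the order. Multiplying $\T^{-1}(D_t-\mathbf{\Phi}_{*})\T=D_t-\overline{\mathbf{\Phi}}_{*}$ on the left by $\T$ and using the operator identity $D_t\circ\T=\T_t+\T\circ D_t$, the statement becomes the purely algebraic relation $\mathbf{\Phi}_{*}\T-\T_t=\T\,\overline{\mathbf{\Phi}}_{*}$. I would write $\overline{\mathbf{\Phi}}_{*}=\operatorname{diag}(A,B)$ with $A,B$ scalar pseudo-differential series to be found, substitute the explicit forms \eqref{eq:sepsys} and \eqref{eq:formT}, and compare the four matrix entries. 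The $(1,1)$ entry gives $A=T_{1}$; the $(2,2)$ entry gives $B=UT_{2}^{-1}+V$; after using $A=T_{1}$ the $(2,1)$ entry becomes the operator Riccati equation
\[
T_{1}^{2}-VT_{1}+(T_{1})_{t}=U ;
\]
and the $(1,2)$ entry $1-(T_{2}^{-1})_{t}=T_{2}^{-1}B$, after inserting $B=UT_{2}^{-1}+V$ and using $T_{2}^{-1}(T_{2})_{t}T_{2}^{-1}=-(T_{2}^{-1})_{t}$ (which follows by differentiating $T_{2}T_{2}^{-1}=\operatorname{Id}$ in $t$), becomes the same Riccati equation with $T_{2}$ in place of $T_{1}$. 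Thus it is enough to produce two pseudo-differential series of order $k$ satisfying
\[
X^{2}-VX+X_{t}=U ,
\]
and then set $T_{1}$ to be the one with leading coefficient $+f_{u_{2k}}^{1/2}$, $T_{2}$ the one with leading coefficient $-f_{u_{2k}}^{1/2}$, $A:=T_{1}$, $B:=UT_{2}^{-1}+V$; by the above, all four entry equations then hold.

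To solve the Riccati equation I would set $X=\sum_{j\le k}x_{j}D^{j}$ and equate coefficients of decreasing powers of $D$ in $X^{2}-VX+X_{t}-U=0$. Since $U$ has order $2k$, $V$ order $k-1$ and $X_{t}$ order $k$, the top coefficient ($D^{2k}$) gives $x_{k}^{2}=f_{u_{2k}}$, hence $x_{k}=\pm f_{u_{2k}}^{1/2}$, which is invertible at a generic point --- exactly the regularity tacitly demanded by the form \eqref{eq:phibar}. For each $j\ge1$ the coefficient of $D^{2k-j}$ has the shape $2x_{k}x_{k-j}+(\text{a differential polynomial in the already-known }x_{k},\dots,x_{k-j+1},\text{ in their }x\text{- and }t\text{-derivatives, and in the coefficients of }U\text{ and }V)=0$, in which the single new unknown $x_{k-j}$ appears linearly with the invertible coefficient $2x_{k}$; hence $x_{k-j}$ is determined uniquely and the recursion never stalls (note $X_{t}$ enters first at order $k$, and then only through coefficients already computed). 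This yields an infinite series of order exactly $k$. The induction is structurally identical to the formal diagonalisation recursion for non-degenerate systems in \cite{MSY}; I expect the careful bookkeeping of the lower-order terms to be the only laborious point and anticipate no conceptual obstacle.

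Finally I would read off the leading terms and verify the invertibility of $\T$. By construction $A=T_{1}=f_{u_{2k}}^{1/2}D^{k}+\cdots$, while $B=UT_{2}^{-1}+V$ has leading term $f_{u_{2k}}\big(-f_{u_{2k}}^{1/2}\big)^{-1}D^{k}=-f_{u_{2k}}^{1/2}D^{k}$ since $V$ has strictly lower order; hence $\overline{\mathbf{\Phi}}_{*}=\operatorname{diag}(A,B)$ is precisely \eqref{eq:phibar}. The matrix $\T$ in \eqref{eq:formT} is invertible as a pseudo-differential series because its Schur complement $1-T_{1}T_{2}^{-1}$ is a scalar series of order $0$ with leading coefficient $1-f_{u_{2k}}^{1/2}\big(-f_{u_{2k}}^{1/2}\big)^{-1}=2\neq0$, hence invertible; so $\T^{-1}$ exists with the expected block form and $\T^{-1}(D_t-\mathbf{\Phi}_{*})\T=D_t-\overline{\mathbf{\Phi}}_{*}$ follows. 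In short, the whole argument rests on the inductive solution of the Riccati equation, which is routine, and on the elementary leading-order computations above.
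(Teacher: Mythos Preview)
Your proof is correct and follows essentially the same route as the paper: reduce the matrix conjugation identity to the scalar operator Riccati equation $T_t+T^2-VT=U$, solve it by descending induction on the order using the invertibility of the leading coefficient $2x_k=\pm 2f_{u_{2k}}^{1/2}$, and take the two sign choices for $T_1$ and $T_2$. Your additional explicit identification $A=T_1$, $B=UT_2^{-1}+V$ and the Schur-complement check of the invertibility of $\T$ fill in details the paper leaves implicit but do not change the approach.
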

\begin{proof}
Substituting
\[
\overline{\bPhi}_*=\mc{\Phi_1 & 0\\
0 & \Phi_2},\qquad
\mathbf{T}=\mc{1 & T_{2}^{-1}\\
T_{1} & 1}
\]
into
\begin{equation}\label{miden}
0=\mathbf{T}\overline{\bPhi}_*-\bPhi_{*}\mathbf{T}+\mathbf{T}_{t},
\end{equation}
we obtain that both $T_1$ and $T_2$ satisfy the Riccati-type operator equation 
\begin{equation}\label{eq:ricc}
T_{t}+T^{2}-VT=U.
\end{equation}
Let us describe the solutions to this equation writing them in the form 
\[T=\tau_k D^k+\tau_{k-1}D^{k-1}+\cdots+\tau_0+\tau_{-1} D^{-1}+\cdots. 
\]
The coefficient of $D^{2k}$ in \eqref{eq:ricc} implies that~$\tau_k^2 = f_{u_{2k}}$. Relations which follow from lower powers of $D$ have the form
\[2\tau_k\tau_j=c(\tau_k,\tau_{k-1},\ldots,\tau_{j+1}),\quad j<k
\]
where the rhs~$c$ is a differential expression depending only on previously found unknowns $\tau_i$,  $i>j$ and the coefficients of~$U$ and~$V$, i.e.~on the rhs of system~\eqref{eq:eqsys}.
Thus, choosing a leading coefficient $\tau_k$, we uniquely find all other coefficients of~$T$. As a result, we get two different solutions of equation \eqref{eq:ricc} with
 leading coefficients $f_{u_{2k}}^{1/2}$ and $-f_{u_{2k}}^{1/2}$. Taking the first solution as~$T_1$ and the second as~$T_2$ leads to formula~\eqref{eq:phibar}.
\end{proof}

\begin{remark}\label{rem:sysvux}
The diagonalisation procedure of Theorem~\ref{thm:diagF} can be applied to many more systems. For example, the systems in Remark~\ref{rem:sdx} can be diagonalised using a matrix~\eqref{eq:formT} with~$T_1$ and~$T_2$ being two different solutions of the operator equation
\[T_t+TD\cdot T-VT=U.
\]
The resulting diagonalised Fréchet derivative takes the form
\[
\overline{\bPhi}_{*}
=\begin{bmatrix} f_{u_{2k-1}}^{1/2} D^{k}+\cdots  & 0\\
0 & -f_{u_{2k-1}}^{1/2}D^{k} +\cdots
\end{bmatrix}
\]
and, thus, this class of systems is regularly diagonalisable.
\end{remark}

\textbf{Recursion, symplectic operators, and integrability conditions.}
As it was mentioned in Paragraphs~2.1 and~2.2, in the case $k=2 l$ the matrix $\overline{\bS}$ consists of $l$ blocks of  size $2\times 2$.
In particular, for systems~\eqref{eq:eqsys} with $k=2$ we have 
\begin{equation}\label{keven}
\overline{\bPhi}_*=\mc{\Phi_1 & 0\\
0 & \Phi_2},  \qquad 
\overline{\R}=\mc{R_1&0\\0& R_2},\qquad \overline{\bS}=\mc{0&S_1\\S_2&0}.
\end{equation}
Considering~\eqref{eq:specrv} it follows that
\begin{equation}\label{eq:eqRS}
\overline{\R}=\mc{R&0\\0&-S^{-1}R^+S},\qquad \overline{\bS}=\mc{0&S\\-S^+&0}.
\end{equation}
where relations~\eqref{eq:gsymp1}, \eqref{eq:gsymp2} imply that 
\begin{equation}\label{eq:sceqRS}R_t=[\Phi_1,R],\qquad S_{t}+\Phi_1^{+}S+S\Phi_2=0.
\end{equation}
Using~\eqref{eq:eqRS} it is easy to prove that
\begin{equation}\label{eq:rhosj}
\rho_{1,j}+(-1)^j \rho_{2,j}\in{\rm Im}\, D, \qquad j=-1,0,1,\dots  .
\end{equation}
Let us denote
\begin{gather*}
\rho_{-1}=\tfrac12(\rho_{1,-1}+\rho_{2,-1}),
\\
\begin{gathered}
\quad\rho_{4i}=\rho_{1,2i}+\rho_{2,2i},\quad\rho_{4i+1}=\rho_{1,2i}-\rho_{2,2i},\\
\rho_{4i+2}=\rho_{1,2i+1}-\rho_{2,2i+1},\quad \rho_{4i+3}=\rho_{1,2i+1}+\rho_{2,2i+1},
\end{gathered}
\qquad i=0,1,2,3,\ldots .
\end{gather*}
According to \eqref{eq:rhos}, all functions~$\rho_{i}$ have to be local conserved densities: ~$D_t\rho_i=D\sigma_i$ for~$i=-1,0,1,\ldots$.
From~\eqref{eq:rhosj} it follows that
the even-numbered densities~$\rho_{2i}$ are divergencies (total $x$-derivatives): ~$\rho_{2i}=D\pi_{2i}$, $i=0,1,2,\ldots$. In this case ~$\sigma_{2i}=D_t\pi_{2i}$. 
The fluxes $\sigma_{2i-1}$ and~potentials~$\pi_{2i}$ are functions appearing in the coefficients of ~$R$ and~$S$.

The relations
\begin{equation}\label{eq:cint}
    \begin{aligned}
        D_t\rho_i&=D\sigma_i,&&\text{if $i$ is $-1$ or odd}\\
        \rho_i&=D\pi_i,&&\text{if $i$ is even}
    \end{aligned}
\end{equation}
form an infinite set of integrability conditions for $S$-integrable systems ~\eqref{eq:eqsys},~$k=2$.

\section{Boussinesq-type systems of fourth order}\label{sec:4}

Let us detail the results and constructions of the previous section in the case of systems~\eqref{eq:eqsys}, $k=2$:
\begin{equation}\label{eq:eqsys4}
u_t=v,\qquad
v_t=f(u,u_1,u_2,u_3,u_4,v,v_1).
\end{equation} 
For such systems $U$ and $V$ are differential operators of orders 4 and 1:
\[
U= \uu_4D^{4}+\uu_3D^{3}+\uu_2D^{2}+\uu_1D+\uu_0,\qquad
V= \vv_{1}D+\vv_{0}
\]
where for conciseness we have denoted $\uu_i=\partial f/\partial u_i$, $\uu_0=\partial f/\partial u$, $\mathfrak{v}_1=\partial f/\partial v_1$ and $\mathfrak{v}_0=\partial f/\partial v$.

Solving~\eqref{eq:ricc} yields that the first terms in~$T_1$ are
\begin{multline*}
T_{1}=\sqrt{\uu_4}\, D^2 + \left[\frac{\uu_3}{2 \sqrt{\uu_4}}+\frac{\vv_1}{2}-\frac{D_{x}\uu_4}{2 \sqrt{\uu_4}}\right]\, D
\\
+ \bigg[-\frac{\uu_3^2}{8 \uu_4^{3/2}}+\frac{3 D \uu_4\, \uu_3}{8 \uu_4^{3/2}}+\frac{\vv_1^2}{8 \sqrt{\uu_4}}-\frac{\left(D \uu_4\right)^2}{8 \uu_4^{3/2}}+\frac{\vv_0}{2}
-\frac{D \uu_3}{2 \sqrt{\uu_4}}
\\
\quad+\frac{\vv_1 D \uu_4}{8 \uu_4}-\frac{1}{2} D \vv_1-\frac{D_{t}\uu_4}{4 \uu_4}+\frac{D^2\uu_4}{4 \sqrt{\uu_4}}+\frac{\uu_2}{2 \sqrt{\uu_4}}\bigg]
{}+\cdots
\end{multline*}
and~$T_2$ can be obtained from~$T_1$ via the replacement $\sqrt{\uu_4}\to -\sqrt{\uu_4}$.  Relation \eqref{miden} leads to
\begin{multline*}
\Phi_1=\sqrt{\uu_4}\, D^2 + \bigg[\frac{\uu_3}{2 \sqrt{\uu_4}}+\frac{\vv_1}{2}-\frac{D\uu_4}{2 \sqrt{\uu_4}}\bigg]\, D
\\
{} + \bigg[-\frac{\uu_3^2}{8 \uu_4^{3/2}}+\frac{3 D\uu_4 \uu_3}{8 \uu_4^{3/2}}+\frac{\vv_1^2}{8 \sqrt{\uu_4}}-\frac{\left(D\uu_4\right)^2}{8 \uu_4^{3/2}}+\frac{\vv_0}{2}-\frac{D\uu_3}{2 \sqrt{\uu_4}}\\
\quad{}+\frac{\vv_1 D\uu_4}{8 \uu_4}-\frac{1}{2} D\vv_1-\frac{D_{t}\uu_4}{4 \uu_4}+\frac{D_{xx}\uu_4}{4 \sqrt{\uu_4}}+\frac{\uu_2}{2 \sqrt{\uu_4}}\bigg]+\cdots
\end{multline*}
The series $\Phi_2$ is related to $\Phi_1$ by the replacement $\sqrt{\uu_4}\to -\sqrt{\uu_4}$.

The first terms of series~$R$ and $S$ defined by relations \eqref{eq:sceqRS} have the form
\begin{multline*}
R=
\uu_4^{1/4}\, D + \left[-\frac{1}{2}\sigma_{-1}+\frac{\vv_1}{4 \uu_4^{1/4}}-\frac{3 D\uu_4}{8 \uu_4^{3/4}}+\frac{\uu_3}{4 \uu_4^{3/4}}\right]
\\
{} + \Bigg[\frac{\sigma_{-1}^2}{8 \uu_4^{1/4}}+\frac{1}{4}D\sigma_{-1}+\frac{\sigma_0}{2 \uu_4^{1/4}}+\frac{\vv_1^2}{32 \uu_4^{3/4}}-\frac{35 \left(D\uu_4\right)^2}{128 \uu_4^{7/4}}+\frac{\vv_0}{4 \uu_4^{1/4}}-\frac{\uu_3 \vv_1}{16 \uu_4^{5/4}}-\frac{3 D\uu_3}{8 \uu_4^{3/4}}
\\
{}+\frac{3 \vv_1D\uu_4}{16 \uu_4^{5/4}}+\frac{3 \uu_3D\uu_4}{8 \uu_4^{7/4}}-\frac{3 D\vv_1}{8 \uu_4^{1/4}}-\frac{D_{t}\uu_4}{8 \uu_4^{5/4}}+\frac{5 D^2\uu_4}{16 \uu_4^{3/4}}+\frac{\uu_2}{4 \uu_4^{3/4}}-\frac{3 \uu_3^2}{32 \uu_4^{7/4}}\Bigg]\, D^{-1}+\cdots
\end{multline*}
and 
\begin{multline}\label{eq:opS}
S=\e^{\pi_0} D + \e^{\pi_0} \left[-\frac{\pi_2}{\uu_4^{1/4}}-\frac{\vv_1}{4 \sqrt{\uu_4}}-\frac{3 D\uu_4}{8 \uu_4}+\frac{\sigma_{-1}}{2 \uu_4^{1/4}}+\frac{\uu_3}{4 \uu_4}\right]
\\
{} + \e^{\pi_0} \Bigg[\frac{\pi_4}{2 \sqrt{\uu_4}}+\frac{\pi_2^2}{2 \sqrt{\uu_4}}+\frac{\sigma_1}{4 \sqrt{\uu_4}}+\frac{\sigma_{-1}^2}{8 \sqrt{\uu_4}}+\frac{\vv_1^2}{32 \uu_4}-\frac{35 (D\uu_4)^2}{128 \uu_4^2}-\frac{3 D\uu_3}{8 \uu_4}+\frac{3 \uu_3 D\uu_4}{8 \uu_4^2}
\\
{}-\frac{D_{t}\uu_4}{8 \uu_4^{3/2}}+\frac{5 D^2\uu_4}{16 \uu_4}+\frac{\uu_2}{4 \uu_4}-\frac{3 \uu_3^2}{32 \uu_4^2}\Bigg] D^{-1} +\cdots
\end{multline}
where the functions~$\sigma_{2i-1}$ and~$\pi_{2i}$, $i=0,1,2,\ldots$ are those in~\eqref{eq:cint}. 

The first canonical densities\footnote{The expressions for the densities~$\rho_i$ are defined up to a total derivative~$\rho_i\to\rho_i+D\alpha_i$; here we choose~$\alpha_i$ to get simpler expressions for the densities, without forgetting to adjust the fluxes correspondingly, $\sigma_i\to\sigma_i+D_t\alpha_i$.} take the form
\begin{gather}
\label{eq:rhom1}
\rho_{-1}=\frac1{\uu^{1/4}},
\end{gather}
\begin{gather}
\rho_0=\frac{1}{2}\frac{\uu_3}{\uu_4}-\frac{3D\uu_4}{4\uu_4},
\end{gather}
\begin{gather}
\rho_1=-\frac{\sigma_{-1}}{4\uu_4}+\frac{1}{2}\frac{\vv_{1}}{\sqrt{\uu_4}},
\end{gather}
\begin{gather}
\rho_2=\frac{\sigma_{0}}{2\uu_4^{1/4}}+\frac{1}{2}D\sigma_{-1}+\frac{3\vv_{1}D\uu_4}{8\uu_4^{5/4}}-\frac{3D\vv_{1}}{4\uu_4^{1/4}}+\frac{\vv_{0}}{2\uu_4^{1/4}}-\frac{\uu_3\vv_{1}}{8\uu_4^{5/4}},
\end{gather}
\begin{gather}
\label{eq:rho3}
\begin{aligned}\rho_3= & \frac{\sigma_1}{2\uu_4^{1/4}}+\frac{\sigma_{-1}^{2}}{4\uu_4^{1/4}}-\frac{D_{t}\uu_4}{4\uu_4^{5/4}}-\frac{35\left(D\uu_4\right)^{2}}{64\uu_4^{7/4}}-\frac{3D\uu_3}{4\uu_4^{3/4}}
\\
 &{}+\frac{3\uu_3D\uu_4}{4\uu_4^{7/4}} +\frac{5D^{2}\uu_4}{8\uu_4^{3/4}}-\frac{3\uu_3^{2}}{16\uu_4^{7/4}}+\frac{\vv_{1}^{2}}{16\uu_4^{3/4}}+\frac{\uu_2}{2\uu_4^{3/4}}.
\end{aligned}
\end{gather}
Subsequent densities can be calculated by computer. Their length grows fast, and we do not present them here.

\section{On the classification of integrable fourth-order Boussinesq-type equations with separant~1}\label{sec:5}

As an application of the previous developments, we consider systems~\eqref{eq:eqsys4} of the form
\begin{equation}\label{eq:sys41}
u_t=v,\qquad
v_t=u_4+f(u,u_1,u_2,u_3,v,v_1)
\end{equation}
and show how to use the integrability conditions for finding the integrable cases.

For systems~\eqref{eq:sys41},   $\uu_4$ is equal to~1, considerably simplifying the canonical densities, being the first ones:
\begin{gather*}
\rho_{-1}=1,\quad\rho_{0}=\frac{\uu_3}{2},\quad\rho_1=\frac{\vv_1}{2},
\quad
\rho_2=\frac{\sigma_0}{2}-\frac{3}{4} D\vv_1-\frac{\uu_3 \vv_1}{8}+\frac{\vv_0}{2},\\
\rho_3=\frac{\sigma_1}{2}-\frac{3}{4} D\uu_3-\frac{3 \uu_3^2}{16}+\frac{\vv_1^2}{16}+\frac{\uu_2}{2},
\\
\begin{split}\rho_4&=\sigma_2-\frac{1}{2}D_{t}\uu_3+\frac{3}{4} \uu_3 D\uu_3-D\uu_2+\frac{1}{2} D^2\uu_3-\frac{1}{4} \vv_1D\vv_1
\\
&\qquad{}-\frac{1}{8} \uu_3 
\vv_1^2+\frac{\uu_3^3}{8}+\frac{\vv_0 \vv_1}{2}-\frac{\uu_2 \uu_3}{2}+\uu_1.
\end{split}
\end{gather*}
We will refer to the integrability condition corresponding to a density~$\rho_i$ as~$C_i$.  

\begin{lemma}\label{ansatz} If system \eqref{eq:sys41} satisfies the first two integrability conditions 
\begin{gather}
C_0:\, \uu_3=2D g(u,u_1,u_2,v)\label{eq:dc0}\\
C_1:\, D_t \frac{\vv_1}{2}=D\sigma_1\label{eq:dc1}
\end{gather}
then it has the form
\begin{equation}\label{eq:sys41eq:sys41b}
\begin{aligned}
u_t&=v,\\
v_t&=u_4+g_{u_2}u_3^2+2g_{v}v_1u_3+2\left(g_{u_1}u_2+g_uu_1\right)u_3+f_2 v_1^2+f_1 v_1+ f_0.
\end{aligned}
\end{equation}
where~$f_i=f_i(u,u_1,u_2,v)$.
\end{lemma}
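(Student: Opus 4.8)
The plan is to extract structural information about $f$ directly from the two integrability conditions $C_0$ and $C_1$, working in the jet space where all the $u_i$, $v_i$ are independent coordinates.

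First I would analyse $C_0$. The density is $\rho_0=\uu_3/2$ where $\uu_3=\partial f/\partial u_3$, so $C_0$ says $\uu_3$ is a conserved density, i.e. $\uu_3\in\operatorname{Im}D$ modulo the usual triviality. The key observation is that $\uu_3$ depends only on $u,u_1,u_2,u_3,v,v_1$ (the same variables as $f$), so the statement $\uu_3=Dg$ forces $g$ to depend on even fewer variables. Concretely, $Dg=g_{u_2}u_3+g_{u_1}u_2+g_uu_1+g_vv_1+\cdots$; for this to be a function of $u,u_1,u_2,u_3,v,v_1$ only, and since applying $D$ raises the order, $g$ can depend at most on $u,u_1,u_2,v$ — any dependence on $u_3$ or $v_1$ in $g$ would produce $u_4$ or $v_2$ terms in $Dg$, contradicting that $\uu_3$ lives in the stated jet variables. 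Hence $\uu_3=Dg(u,u_1,u_2,v)=g_{u_2}u_3+g_vv_1+g_{u_1}u_2+g_uu_1$. Now integrate this with respect to $u_3$: since the right-hand side is affine in $u_3$ with coefficient $g_{u_2}$ (itself independent of $u_3$), we get $f=\tfrac12 g_{u_2}u_3^2+(g_vv_1+g_{u_1}u_2+g_uu_1)u_3+h$, where $h=h(u,u_1,u_2,v,v_1)$ is the "constant of integration." This already produces the first three summands $g_{u_2}u_3^2+2g_vv_1u_3+2(g_{u_1}u_2+g_uu_1)u_3$ up to the factor-of-two convention (so really $g$ should be rescaled, or one writes $\tfrac12 g$ — I would just track the normalisation to match the displayed form).

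Next I would analyse $C_1$ to pin down the $v_1$-dependence of $h$. We have $\rho_1=\vv_1/2$ with $\vv_1=\partial f/\partial v_1$; from the expression for $f$ just obtained, $\vv_1=g_vu_3+h_{v_1}$. The condition $C_1$ is $D_t(\vv_1/2)=D\sigma_1$ for some differential function $\sigma_1$, i.e. $D_t\vv_1\in\operatorname{Im}D$. I would compute $D_t\vv_1$ using the evolution equations $u_t=v$, $v_t=f$ (so $D_t=\partial_t+\sum v_{i+1}\partial_{u_i}+\sum (D^if)\partial_{v_i}$ acting on $\vv_1$), and then impose that the result be a total $x$-derivative. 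The term $h_{v_1v_1}$ controls the $v_1$-nonlinearity of $h$: expanding $D_t\vv_1$, the highest "bad" (non-removable) term will involve $h_{v_1v_1}$ multiplied by something like $v_1$ times a high derivative, and requiring this to be a divergence forces $h_{v_1v_1v_1}=0$ — equivalently $h$ is a quadratic polynomial in $v_1$: $h=f_2v_1^2+f_1v_1+f_0$ with $f_i=f_i(u,u_1,u_2,v)$. The cleanest way to see this is via the standard variational/Euler-operator criterion: a differential function is in $\operatorname{Im}D$ iff its variational derivative vanishes; applying $\delta/\delta v$ (or $\delta/\delta u$) to $D_t\vv_1$ and reading off the coefficient of the top-order jet variable gives an ODE in $v_1$ for $h$ whose only solutions are quadratics. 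Combining the two steps yields exactly \eqref{eq:sys41eq:sys41b}.

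The main obstacle is the second step: the computation of $D_t\vv_1$ and the extraction of the obstruction to it being a total derivative is genuinely messy, because $\vv_1=g_vu_3+h_{v_1}$ already couples the $g$-part and the $h$-part, and $D_t$ brings in $D^if$ which contains the quadratic-in-$u_3$ terms. One must be careful that the cross-terms between $g$ and $h$ do not generate spurious constraints, and that the only surviving condition is $h_{v_1v_1v_1}=0$; in particular the $g$-dependent pieces should either be automatically divergences or be absorbable into $\sigma_1$. I would organise this by grading terms by their order in the jet variables (weight), checking that at the top weight the only coefficient that cannot be written as $D(\cdot)$ is the one proportional to $h_{v_1v_1v_1}$, and then noting that once $h$ is quadratic in $v_1$ the remaining terms are of lower weight and impose no further restriction on the \emph{form} \eqref{eq:sys41eq:sys41b} (they will of course constrain the $f_i$ later, but that is the subject of the subsequent sections, not of this lemma). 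The $C_0$ step, by contrast, is essentially immediate once one notes the jet-variable count.
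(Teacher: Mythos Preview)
Your proposal is correct and follows essentially the same route as the paper: integrate $C_0$ in $u_3$ to get the $u_3$-dependent part of $f$ plus a remainder $\tilde f(u,u_1,u_2,v,v_1)$, then use $C_1$ to force $\tilde f_{v_1v_1v_1}=0$. The paper makes the second step concrete by subtracting $D\bigl(u_4\,\tilde f_{v_1v_1}+2g_v v_2\bigr)$ from $D_t\vv_1$ and observing that the residual term $-\tilde f_{v_1v_1v_1}\,u_4 v_2$ obstructs total-derivativeness (a product of the two top jet variables cannot appear in any $DG$); your description of the obstruction was vaguer, but the mechanism is the same, and the paper's Remark~\ref{rem:5} explicitly records the Euler-operator alternative you mention.
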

\begin{proof}
Condition \eqref{eq:dc0} amounts to
\[
f_{u_3}(u,u_1,u_2,u_3,v,v_1)=2Dg(u,u_1,u_2,v)=2g_{u_2}u_3+2g_{v}v_1+2g_{u_1}u_2+2g_uu_1
\]
and, therefore
\[
f=g_{u_2}u_3^2+2g_{v}v_1u_3+2\left(g_{u_1}u_2+g_uu_1\right)u_3+\tilde{f}(u,u_1,u_2,v,v_1).
\]
Now condition~\eqref{eq:dc1} becomes equivalent to
\begin{equation}\label{cond1}
D_t(2g_{v}u_3+\tilde{f}_{v_1})=D\sigma_1(u,u_1,u_2,u_3,u_4,v,v_1,v_2),
\end{equation}
where $\sigma_1$ is an unknown function. By subtracting an appropriate total $x$-derivative from both sides, we can lower the differential order of the left-hand side 
until the dependence of the 
right-hand side on higher derivatives becomes nonlinear. In particular, 
\[
D_t(2g_{v}u_3+\tilde{f}_{v_1})-D(u_4\tilde{f}_{v_1,v_1}+2g_vv_2) 
=-\tilde{f}_{v_1v_1v_1} u_4v_2 +Au_4+Bv_2+C
\]
where $A$, $B$, $C$ are functions of~$(u,u_1,u_2,u_3,v,v_1)$.
Since the rhs has to be a total $x$-derivative, it should be ~$\tilde{f}_{v_1v_1v_1}=0$. 
This proves the lemma and illustrates the general procedure to solve the system of integrability conditions.
\end{proof}
\begin{remark}\label{rem:5} The standard way to eliminate $\sigma_1$ from \eqref{cond1} is applying Euler operators~(cf.~e.g.~\cite{Olv93}) to both sides of the equation. We would obtain
\begin{multline*}
    0=\frac{\delta}{\delta u} \Big(D_t(2g_{v}u_3+\tilde{f}_{v_1})\Big)= -\frac{1}{2} \tilde{f}_{v_1v_1v_1}v_{6}-\frac{5}{2}  \tilde{f}_{v_1v_1v_1v_1}v_{2} v_{5}
    \\
    +\left(g_{u_2} \tilde{f}_{v_1v_1v_1}-\frac52 \tilde{f}_{u_2v_1v_1v_1}\right)u_3 v_{5} 
    +\left(g_{u_2} \tilde{f}_{v_1v_1v_1}-\frac32 \tilde{f}_{u_2v_1v_1v_1}\right) v_{2} u_{6}
    \\
    -  \left(6 g_{vv} g_{u_2}-3 \tilde{f}_{u_2v_1v_1} g_{u_2}-3 g_{u_2vv}+\frac{3}{2}\tilde{f}_{u_2u_2v_1v_1}\right)u_3 u_{6}+\cdots
    \\
    \Rightarrow \tilde{f}_{v_1v_1v_1}=0.
\end{multline*}
We prefer the procedure based on the refinement of $\sigma_1$ for two reasons. First, to use the condition~ $C_3$ we need to know at least the dependence of~$\sigma_1$ on higher derivatives. Elimination of~$\sigma_1$ won't help here. Second, the direct computation of the variational derivative of long expressions is very demanding on computational resources.
\end{remark}

Continuing the classification of integrable systems \eqref{eq:sys41b} requires considering a function $g$ of different types.  In this paper we investigate the cases~$g=g(u)$ and $g=g(u_1)$ (it can be proved that the case~$g=g(u,u_1)$ can be reduced to these two subcases). It turns out that, even in these simple cases, we arrive at non-trivial lists of integrable equations. 

\begin{remark}
The class of systems \eqref{eq:sys41b} is invariant with respect to point transformations
\begin{equation}\label{eq:trsys}
u\to\varphi(p),\qquad v\to\varphi'(p)q.
\end{equation}
A classification of integrable systems \eqref{eq:sys41b} should be performed up to transformations \eqref{eq:trsys}.
It is easy to verify that in the case $g=g(u,u_1)$ the function $g$ transforms as:
\[g(u,u_1)\to2\log\varphi'(p)+g(\varphi(p),\varphi'(p)p_1).
\]
\end{remark}

\section{Classification of the case~$g=g(u)$}\label{sec:6}

Using a transformation \eqref{eq:trsys}, we reduce $g$ to zero. Then the system~\eqref{eq:sys41b} becomes
\begin{equation}\label{eq:sysg0}
\begin{aligned}
u_t&=v,\\
v_t&=u_4+f_2 v_1^2+f_1 v_1+ f_0,
\end{aligned}
\end{equation}
with $f_i=f_i(u,u_1,u_2,v)$, $i=0,1,2$.
\begin{theorem}\label{thm:gu}
If a nonlinear equation of the form \eqref{eq:sysg0} satisfies the integrability conditions $C_0$--$C_8$ then it can be reduced, using transformations $t\to \alpha t$, $x\to\beta x+\gamma t$, $u\to \delta u +\kappa x + \lambda t+\xi x^2+\chi t^2$ to one of the following:
\begin{flalign}
\hspace{15pt}&\left\{\begin{aligned}\label{eq:gu1}
u_t&=v\\
v_t&=u_4+ u_2^2
\end{aligned}\right.&
\\[2mm]
&\left\{\begin{aligned}\label{eq:gu2}
u_t&=v\\
v_t&=u_4+ 2u_1u_2
\end{aligned}\right.
\\[2mm]
&\left\{\begin{aligned}\label{eq:gu3}
u_t&=v\\
v_t&=u_4+ 2u_2 v+2u_1^2 u_2
\end{aligned}\right.
\\[2mm]
&\left\{\begin{aligned}\label{eq:gu4}
u_t&=v\\
v_t&=u_4+ 4u_1 v_1+2 u_2 v-6 u_1^2 u_2.
\end{aligned}\right.
\end{flalign}
\end{theorem}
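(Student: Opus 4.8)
Recall that, after Lemma~\ref{ansatz} and the normalisation $g\equiv 0$ via~\eqref{eq:trsys}, every system under consideration has the form~\eqref{eq:sysg0} with $f=f_2v_1^2+f_1v_1+f_0$, $f_i=f_i(u,u_1,u_2,v)$, and no dependence on $u_3,u_4$; moreover $\rho_0=\uu_3/2=0$ here, so $\sigma_0=D_t\rho_0=0$. The plan is to impose the remaining conditions $C_2,\dots,C_8$ one after another. For even $i$ the condition $C_i$ asks that $\rho_i=D\pi_i$ be a total $x$-derivative, while for odd $i$ it asks $D_t\rho_i=D\sigma_i$; since each $\rho_i$ (in the explicit form recorded in Section~\ref{sec:5}, specialised to $\uu_4=1$) involves the auxiliary potentials $\pi_{2j}$ and fluxes $\sigma_{2j-1}$ of smaller index, the conditions form a cascade and must be treated in increasing order of $i$: each one both carves out new differential constraints on $f_0,f_1,f_2$ and determines the next $\pi$ or $\sigma$ up to the usual total-derivative/constant ambiguity.

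The mechanics at each step mirror the proof of Lemma~\ref{ansatz}. For $C_2$ one inserts $\sigma_0=0$ into $\rho_2$ and reads off the constraint that makes it exact, recovering $\pi_2$. For $C_3$ one forms $D_t\rho_3-D\sigma_3$ with $\sigma_1$ as obtained from $C_1$, and --- rather than eliminating the unknown flux by an Euler operator, which is prohibitively heavy on long expressions (cf.~Remark~\ref{rem:5}) --- subtracts suitable total $x$-derivatives to lower the differential order until the residue becomes genuinely nonlinear in the top jet variables still present; the vanishing of those leading nonlinear coefficients furnishes PDEs on $f_0,f_1,f_2$, and the linear remainder fixes $\sigma_3$. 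Iterating through $C_4,C_5,C_6,C_7,C_8$ in this way produces an overdetermined system of (generally nonlinear) PDEs for the three unknown functions.

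The heart of the proof --- and the main obstacle --- is the integration of this overdetermined system, carried out by an exhaustive case analysis. A convenient first structural reduction is to extract from the highest-order fragments of the conservation conditions bounds on the $v$-degree of $f_2,f_1,f_0$ (leaving finitely many coefficient functions of $(u,u_1,u_2)$), after which one branches on the vanishing or non-vanishing of key coefficients: whether $f_2\equiv0$, whether the $f_i$ genuinely depend on $u_2$, whether certain second derivatives in $u_1$ or $v$ vanish, and so on. In each branch the accumulated constraints either become inconsistent --- so that branch contributes no integrable system --- or collapse $f_0,f_1,f_2$ onto a low-parameter family. Guaranteeing that the branching is complete and that no admissible sub-case is overlooked is the delicate point; the computations are long enough to be organised and checked with computer algebra, and it is precisely the last conditions (up to $C_8$) that are needed to eliminate the surviving spurious branches.

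Finally, each family of solutions obtained is reduced modulo the point symmetries of the class~\eqref{eq:sysg0}, namely $t\to\alpha t$, $x\to\beta x+\gamma t$, $u\to\delta u+\kappa x+\lambda t+\xi x^2+\chi t^2$: the scalings fix the remaining constants, the Galilean shift $\gamma t$ removes a residual first-order term, and the $\kappa x,\lambda t,\xi x^2,\chi t^2$ pieces absorb inhomogeneous lower-order contributions. Matching the resulting normal forms to~\eqref{eq:gu1}--\eqref{eq:gu4} and checking that these four are pairwise inequivalent completes the argument. That the four systems are genuinely integrable, and not merely consistent with $C_0$--$C_8$, is a separate matter, addressed by exhibiting recursion operators as in Section~\ref{sec:8}.
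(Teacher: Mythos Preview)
Your outline correctly identifies the overall strategy and it matches the paper's approach: impose the canonical conservation conditions in order, integrate the resulting overdetermined system by branching, and then normalise with the listed point transformations. However, what you have written is a description of a method rather than a proof. The paper's argument actually produces two concrete intermediate results that do all the work, and your proposal contains neither.

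First, the paper isolates (as Lemma~\ref{lem:class}) a sharp structural reduction: from $C_1$--$C_6$ one obtains $f_2\equiv 0$, then $f_1=\alpha_0+\alpha_1 u_1$, then successively pins down $f_0$ until the system is forced into the explicit eight-parameter family~\eqref{eq:constsys} depending only on constants $\alpha_0,\alpha_1,\beta,\delta_0,\delta_1,\kappa_0,\kappa_1,\tau_0$. Your text alludes to ``bounds on the $v$-degree'' and ``branching on key coefficients'' but never states or derives this form; without it there is no finite problem to hand to the remaining conditions. Second, once~\eqref{eq:constsys} is in hand, the paper writes down the finite algebraic system on the constants coming from the higher conditions (e.g.\ $\delta_1=0$, $\alpha_1\delta_0=\beta\delta_0=\kappa_1\delta_0=0$, $\alpha_1(\alpha_1-2\beta)=0$, etc.), solves it into three explicit subfamilies, and only then applies the transformations to reach \eqref{eq:gu1}--\eqref{eq:gu4}. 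Your proposal replaces both of these steps with the assertion that the computations ``are long enough to be organised and checked with computer algebra'', which is not a proof: the reader has no way to verify completeness of the branching or correctness of the outcome. To turn your outline into a proof you must at minimum exhibit the analogue of~\eqref{eq:constsys} and the algebraic relations it satisfies.
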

In order to prove Theorem~\ref{thm:gu}, we first perform a preliminary classification producing an ansatz for integrable systems \eqref{eq:sysg0} where the only undetermined expressions are constants. In a second step we continue applying integrability conditions until obtaining a final list of equations \eqref{eq:gu1}-\eqref{eq:gu4} that pass a large (but finite) number of conditions. The third step is to directly prove the integrability of the listed equations by finding a recursion operator for each of them (see~Section~\ref{sec:8}).
\begin{lemma}\label{lem:class}
All integrable systems of the form~\eqref{eq:sysg0} can be written as
\begin{equation}\label{eq:constsys}
\begin{aligned}
u_t&=v\\
v_t&=u_4+(\alpha_0+\alpha_1 u_1) v_1+ 2\left(\beta-\frac14 \alpha_1\right)u_2v +(\delta_0+\delta_1 u) u_2^2
\\
&
\qquad{}+\frac18\left(16\delta_1-\alpha_1^2-12\alpha_1 \beta+16\beta^2\right)u_1^2u_2+(\kappa_0+\kappa_1u_1)u_2 + \tau_0
\end{aligned}
\end{equation}
where~$\alpha_0$, $\alpha_1$, $\beta$, $\delta_0$, $\delta_1$, $\kappa_0$, $\kappa_1$, $\tau_0$ are constants.
\end{lemma}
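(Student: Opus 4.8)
The plan is to carry the ansatz $f = f_2 v_1^2 + f_1 v_1 + f_0$, $f_i = f_i(u,u_1,u_2,v)$ --- which holds for \eqref{eq:sysg0} by Lemma~\ref{ansatz} with $g=0$ --- into the canonical densities computed in Sections~\ref{sec:4}--\ref{sec:5}, and then to impose the integrability conditions $C_i$ successively, solving at each step the overdetermined system of partial differential equations they impose on $f_0, f_1, f_2$. Because $\uu_4 = 1$ and $\uu_3 = f_{u_3} = 0$ for \eqref{eq:sysg0}, the conditions $C_{-1}$ and $C_0$ are identities, $C_1$ is the first one with content, and only a bounded number of conditions (a subset of $C_1,\dots,C_8$) is needed to reach \eqref{eq:constsys}.

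The first step is to exploit $C_1$ in full. As in the proof of Lemma~\ref{ansatz} one writes $D_t \rho_1 = D_t(\vv_1/2) = D\sigma_1$, subtracts total $x$-derivatives to lower the differential order of the left-hand side, and then reads off constraints from the requirement that the remainder --- now polynomial in the remaining highest jet variables --- be a divergence. The coefficient of the top jet forces relations on $f_2$; the coefficients of the mixed quadratic terms $u_3u_4$, $u_4v_1$, $u_2v_2$, \dots\ tie the $u_2$- and $v$-dependence of $f_1$ to that of $f_2$; and in the process the dependence of $\sigma_1$ on the higher derivatives is determined. One keeps $\sigma_1$ in this refined form rather than eliminating it by a variational derivative, since that dependence is needed to make $C_3$ usable (cf.\ Remark~\ref{rem:5}).

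Next one imposes $C_2, C_3, C_4, \dots$ in turn. For $C_2$ one uses $\rho_2 = \tfrac12\sigma_0 - \tfrac34 D\vv_1 + \tfrac12\vv_0$ (with $\uu_3 = 0$), where $\sigma_0$ is an auxiliary function appearing in the coefficients of $R$ that the lower conditions have already fixed modulo a divergence; the requirement $\rho_2 \in \operatorname{Im} D$, i.e.\ $\delta\rho_2/\delta u = \delta\rho_2/\delta v = 0$, then annihilates most of the $v$- and $u_2$-dependence of $f_0$ and $f_1$, and together with $C_1$ should force $f_2 \equiv 0$ and $f_1 = \alpha_0 + \alpha_1 u_1$ with constant $\alpha_0, \alpha_1$. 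Repeating with $C_3$ and $C_4$ (and, if needed, $C_5$), each time first solving the lower conditions for the relevant flux $\sigma_{2j-1}$ or potential $\pi_{2j}$ to the order required, one narrows $f_0$: first its $v$-dependence to the single term $2(\beta - \tfrac14\alpha_1)u_2 v$, then its $u_2$-dependence to $(\delta_0 + \delta_1 u)u_2^2 + (\kappa_0 + \kappa_1 u_1)u_2$ with the coefficient of $u_1^2 u_2$ forced to $\tfrac18(16\delta_1 - \alpha_1^2 - 12\alpha_1\beta + 16\beta^2)$, and finally the additive remainder to the constant $\tau_0$. This is exactly the form \eqref{eq:constsys}.

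The difficulty is one of bookkeeping rather than of ideas: the $\rho_i$, and especially the fluxes, grow rapidly in size; each new condition requires carrying along enough of the previously determined auxiliary functions $\sigma_{2j-1}, \pi_{2j}$, which are themselves defined only up to total $x$-derivatives; and each $f_i$ must be reduced monomial by monomial in the highest jet variable present without dropping branches of the case analysis. Checking that the constraints coming from different $C_i$ are mutually consistent is the one place where genuine care is needed; once the PDEs are extracted their integration is routine, and in practice the whole computation is delegated to a computer algebra system.
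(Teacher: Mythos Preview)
Your overall strategy matches the paper's exactly: impose the canonical conservation laws $C_i$ one by one, refine the fluxes/potentials rather than eliminate them, and integrate the resulting overdetermined system on the coefficients $f_2,f_1,f_0$. Where your outline goes wrong is in the specific order and strength of the reductions.

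You assert that $C_1$ together with $C_2$ already force $f_2\equiv0$ and $f_1=\alpha_0+\alpha_1 u_1$ with constants. That is too optimistic. What the paper actually obtains from $C_1,C_2,C_3$ is only
\[
f_2=0,\qquad f_1=a(u,u_1)+f_{10}(u,u_1)\,v+f_{11}(u,u_1)\,u_2,
\]
so $f_1$ still carries $v$- and $u_2$-dependence through two undetermined functions of $(u,u_1)$. Getting rid of them requires going further: $C_4$ cuts $f_{11}$ down to $\bar f_{11}(u)$, after which $C_3$ produces the branching condition $f_{10}\,\bar f_{11}=0$; on the branch $\bar f_{11}=0$ one must invoke $C_6$ to kill $f_{10}$, so that in every case $f_{10}=0$. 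Only then does $C_1$ collapse $a(u,u_1)$ to an affine function of $u_1$, and a further pass through $C_3,\dots,C_6$ is needed to make $\bar f_{11}$, $h$, $f_{02}$ and $f_{01}$ constant or linear as stated. There is a second branching, $\beta\cdot \bar f_{00}(u)=0$, and on the branch $\bar f_{00}\not\equiv0$ one has to push to $C_9$ (not $C_8$) to conclude that $\bar f_{00}$ is the constant $\tau_0$.

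So the ``bookkeeping'' you allude to is not merely mechanical: the nontrivial content of the proof is precisely these two case splits and the fact that the higher conditions close them up. Your outline should record the intermediate ansatz for $f_1$ and $f_0$, identify the two branching conditions explicitly, and track which $C_i$ resolves each branch; without that, the claim that $C_1,C_2$ already pin down $f_1$ is simply false.
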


\begin{proof}[Scheme of the proof of Lemma 2]
It follows from~$C_1$, $C_2$ and~$C_3$ that
\begin{gather}
f_2(u,u_1, u_2,v)=0,\label{eq:f2}
\\
f_1(u,u_1,u_2,v)=a(u,u_1)+f_{10}(u,u_1)v+f_{11}(u,u_1)u_2.\label{eq:f1}
\end{gather}
Now condition~$C_2$ allows to write~$f_0$ in terms of other functions as
\begin{equation}\label{eq:f0a}
f_0(u,u_1,u_2,v)=
\tilde{f}_0\left(u,u_1,u_2\right)+\tfrac{1}{2} \left[(f_{10})_{u_1} u_2 
+(f_{10})_{u} u_1\right]v^2
+2( h_{u_1}u_2 + h_{u}u_1 )v
\end{equation}
where~$h$ is a function of~$u$ and~$u_1$.
Condition~$C_3$ implies that~$f_0(u,u_1,u_2)$ can be written as
\[
\tilde{f}_0(u,u_1,u_2)=f_{00}+u_2 f_{01}+u_2^2 f_{02}-\tfrac{1}{8} u_2^3 f_{10}^2
-\tfrac{1}{24} u_2^3 f_{11}^2
\]
where~$f_{00}$, $f_{01}$ and~$f_{02}$ are functions of~$u$ and~$u_1$. Condition~$C_4$ sets~$f_{11}(u,u_1)=\bar{f}_{11}(u)$ and from~$C_3$ arises the following condition:
\[f_{10}(u,u_1)\bar{f}_{11}(u)=0.
\]
Setting~$\bar{f}_{11}(u)=0$ implies, using~$C_6$, that~$f_{10}(u,u_1)=0$. So we have that anyway~$f_{10}(u,u_1)=0$. Then~$C_1$ is equivalent to 
\[ a(u,u_1)=\alpha_0+a_1(u)u_1+\bar{f}'_{11}(u)u_1^2.
\]
Using in a straightforward way conditions~$C_3$, $C_4$, $C_5$ and~$C_6$ one can find that
\begin{gather*}
\bar{f}_{11}(u)=0,\qquad
    a(u,u_1)=\alpha_0+\alpha_1\,u_1,
\\
    h(u,u_1)=\left(\beta-\tfrac14\alpha_1\right)\,u_1,
\\
    f_{00}(u,u_1)=\bar{f}_{00}(u)
\\
   \begin{split}
   f_{01}(u,u_1)&=-\frac{1}{8} u_{1}^2 \left(-16 \delta_1+\alpha_1^2+12 \alpha_1\, \beta-16 \beta^2\right)+\kappa_0+\kappa_1 u_{1},
   \end{split} 
\\
    f_{02}(u,u_1)=\delta_0+\delta_1 u,
\end{gather*}
where~$\beta$, $\alpha_0$, $\alpha_1$, $\delta_0$, $\delta_1$, $\kappa_0$, $\kappa_1$ are arbitrary constants, 
and the branching condition
\[\beta\cdot \bar{f}_{00}(u)=0.
\]
At this moment the system has the form
\begin{equation*}
\begin{aligned}
u_t&=v\\
v_t&=u_4+(\alpha_0+\alpha_1 u_1) v_1+(\delta_0+\delta_1 u) u_2^2 + 2\left(\beta-\frac14\alpha_1\right)u_2v
\\
&
\qquad{}+\frac18\left(16\delta_1-\alpha_1^2-12\alpha_1 \beta+16\beta^2\right)u_1^2u_2+(\kappa_0+\kappa_1 u_1)u_2 +\bar{f}_{00}(u).
\end{aligned}
\end{equation*}
If we suppose that~$\bar{f}_{00}(u)\neq0$, the conditions up to $C_9$ require that~$\bar{f}_{00}(u)$ is constant, so the lemma is proved.
\end{proof}

\begin{proof}[Scheme of the proof of Theorem~\ref{thm:gu}]
Imposing the integrability conditions up to~$C_9$ to a system of type~\eqref{eq:constsys} yields a system of algebraic equations over the constants~$\beta$, $\alpha_0$, $\alpha_1$, $\delta_0$, $\delta_1$, $\kappa_0$, $\kappa_1$ equivalent to 
\begin{gather*}
    \delta_1=0,\\
    \alpha_1 \delta_0=\kappa_1\delta_0=\beta _1 \delta _0=0,\\
    \alpha _1 \tau _0=\kappa _1 \tau _0=\beta _1 \tau _0=0,\\
    \alpha _1 \left(\alpha _1-2 \beta _1\right)=0,\\
    \alpha_1 \left(3 \alpha_0 \alpha_1+4 \kappa_1\right)=0.
\end{gather*}
Solving the previous system leads to the following three subclasses of systems:
\begin{flalign*}
\hspace{15pt}\left\{
\begin{aligned}
u_t&=v\\
v_t&=u_4+\alpha_0 v_1+\delta_1 u_2^2+\kappa_0 u_2
+\tau_0,
\end{aligned}
\right.&&
\end{flalign*}
\begin{flalign*}
\hspace{15pt}\left\{\begin{aligned}
u_t&=v\\
v_t&=u_4+\alpha_0 v_1+2 \beta v u_2+2 \beta^2 u_1^2 u_2+\kappa_1 u_1 u_2+\kappa_0 u_2,
\end{aligned}\right.&&
\end{flalign*}
\begin{flalign*}
\hspace{15pt}\left\{\begin{aligned}
u_t&=v\\
v_t&=u_4+\alpha_1 u_1 v_1+\alpha_0 v_1+\frac{1}{2} \alpha_1 v u_2-\frac{3}{8} \alpha_1^2 u_1^2 u_2
-\frac{3}{4} \alpha_0 \alpha_1 u_1 u_2+\kappa_0 u_2.
\end{aligned}
\right.&&
\end{flalign*}
A Galilean transformation $x\to x+\alpha t$ can set~$\alpha_0=0$, and a further change~$u\to u+\xi x^2$ allows to eliminate~$\kappa_0$. The constant~$\tau_0$ can be cancelled by the change~$u\to u+\tau_0 \,t^2/2$, $v\to v+\tau_0\,t$ so the systems become
\begin{flalign}\label{eq:preboussq}
&\hspace{15pt}\left\{\begin{aligned}
u_t&=v\\
v_t&=u_4+\delta_1 u_2^2,
\end{aligned}\right.&
\end{flalign}
\begin{flalign}\label{eq:pvu2}
&\hspace{15pt}\left\{\begin{aligned}
u_t&=v\\
v_t&=u_4+2 \beta v u_2+2 \beta^2 u_1^2 u_2+\kappa_1 u_1 u_2,
\end{aligned}\right.&
\end{flalign}
\begin{flalign}\label{eq:pu1v1}
&\hspace{15pt}\left\{\begin{aligned}
u_t&=v\\
v_t&=u_4+\alpha_1 u_1 v_1+\frac{1}{2} \alpha_1 v u_2-\frac{3}{8} \alpha_1^2 u_1^2 u_2.
\end{aligned}\right.&
\end{flalign}
System~\eqref{eq:preboussq} can be scaled to~\eqref{eq:gu1}. System~\eqref{eq:pvu2} contains two normalized cases: when~$\beta=0$, $u\to \delta u +\kappa x$ leads to~\eqref{eq:gu2}, while when~$\beta\neq0$ a transformation~$x\to\gamma x$, $t\to\gamma^2t$, $u\to \delta u +\kappa x + \lambda t$ leads to~\eqref{eq:gu3}. System~\eqref{eq:pu1v1} can be reduced to~\eqref{eq:gu4} with a transformation $x\to x + \tau t$, $u\to \delta u  + \lambda t$.
\end{proof}

\section{Classification of the case $g=g(u_1)$}\label{sec:7}

When~$g=g(u_1)$ system~\eqref{eq:sys41b} becomes
\begin{equation}\label{eq:sys41b}
\begin{aligned}
u_t&=v,\\
v_t&=u_4+2g'(u_1)u_2u_3+f_2 v_1^2+f_1 v_1+ f_0.
\end{aligned}
\end{equation}
where~$f_i=f_i(u,u_1,u_2,v)$. In this section we consider that~$g'(u_1)\neq0$, to avoid falling into the previous case.

\begin{theorem}\label{thm:gux}
If a nonlinear equation of the form \eqref{eq:sys41b} satisfies the integrability conditions $C_0$--$C_{10}$ then it can be reduced, using transformations $t\to \alpha t$, $x\to\beta x+\gamma t$, $u\to \delta u +\kappa x + \lambda t$   to one of the following
\begin{flalign}
\left\{\begin{aligned}\label{eq:gux1}
u_t&=v\\
v_t&=u_{4}-\frac{4}{u_{1}}u_{2}u_{3}+\frac{1}{u_{1}^{2}}u_{2}v^{2}+\frac{3}{u_{1}^{2}}u_{2}^{3}
\end{aligned}\right.&&
\end{flalign}
\begin{flalign}
\left\{\begin{aligned}\label{eq:gux2}
u_t&=v\\
v_t&=u_{4}-\frac{4}{u_{1}}u_{2}u_{3}+\frac{1}{u_{1}^{2}}u_{2}v^{2}+\frac{3}{u_{1}^{2}}u_{2}^{3}+\left(\epsilon u_{1}^{2}+\frac{1}{u_{1}}\right)u_{2}
\end{aligned}\right.&&
\end{flalign}
\begin{flalign}
&\left\{\begin{aligned}\label{eq:gux3}
u_t&=v\\
v_t&=u_{4}-\frac{4}{u_{1}}u_{2}u_{3}+\frac{1}{u_{1}^{2}}u_{2}v^{2}-\frac{2}{u_{1}^{2}}u_{2}v+\frac{3}{u_{1}^{2}}u_{2}^{3}+\left(\frac{\epsilon}{u_{1}}+\frac{1}{u_{1}^{2}}\right)u_{2}
\end{aligned}\right.&&
\end{flalign}
\begin{flalign}
&\left\{\begin{aligned}\label{eq:gux4}
u_t&=v\\
v_t&=u_{4}-\frac{4}{u_{1}}u_{2}u_{3}+\frac{1}{u_{1}^{2}}u_{2}v^{2}-\frac{2}{u_{1}^{2}}u_{2}v+\frac{3}{u_{1}^{2}}u_{2}^{3}+\left(u_{1}^{2}+\frac{c}{u_{1}}+\frac{1}{u_{1}^{2}}\right)u_{2}
\end{aligned}\right.&&
\end{flalign}
\begin{flalign}
\left\{\begin{aligned}\label{eq:gux6}
u_t&=v\\
v_t&=u_4+\frac{4}{u_{1}} v v_1-\frac{4}{u_{1}}u_{2}u_{3} -\frac{3}{u_{1}^2}u_{2}v^2 +\frac{3}{u_{1}^2}u_{2}^3
\end{aligned}\right.&&
\end{flalign}
\begin{flalign}
\left\{\begin{aligned}\label{eq:gux5}
u_t&=v\\
v_t&=u_{4}+\frac{4}{u_{1}}vv_1+\frac4{u_{1}}v_1-\frac{4}{u_{1}}u_{2}u_{3}-\frac{3}{u_{1}^{2}}u_{2}v^{2}-\frac{2}{u_{1}^{2}}u_{2}v+\frac{3}{u_{1}^{2}}u_{2}^{3}\\
&\quad{}+\left(\epsilon u_{1}^{2}-\frac{1}{3u_{1}^{2}}\right)u_{2}
\end{aligned}\right.&&
\end{flalign}
\begin{flalign}\label{eq:gux7}
\left\{\begin{aligned}
    u_t&=v\\
    v_t&=u_{4}+\frac{4u_1}{u_1^{2}+1}vv_1-\frac{4u_1}{u_1^{2}+1}u_2u_{3}
    	-\frac{3u_1^{2}-1}{\left(u_1^{2}+1\right)^{2}}u_2v^{2}+\frac{3u_1^{2}-1}{\left(u_1^{2}+1\right)^{2}}u_2^{3}
    \\
    &\quad{}+\left(u_1^{2}+1\right)s(u)u_2
    	+\frac{1}{6}s'(u)\left(u_1^{2}+1\right)\left(3u_1^{2}-1\right)
	\\
	&\text{with } s'''(u)+4s(u)s'(u)=0
    \end{aligned}
\right.&&
\end{flalign}
\begin{flalign}
&\left\{\begin{aligned}\label{eq:gux8}
    u_t&=v\\
    v_t&=u_4+\frac{4u_1}{u_1^2+1}vv_1+\frac{2 u_1}{u_1^2+1}v_1-\frac{4 u_1}{u_1^2+1}u_2u_3 -\frac{3 u_1^2-1}{\left(u_1^2+1\right)^2}u_2v^2 
\\
&\quad{}-\frac{3 u_1^2-1}{\left(u_1^2+1\right)^2}u_2v 
+\frac{3 u_1^2-1}{\left(u_1^2+1\right)^2} u_2^3-\frac{3 u_1^2-1}{4 \left(u_1^2+1\right)^2} u_2+c \left(u_1^2+1\right) u_2
\end{aligned}
\right.&&
\end{flalign}
where~$\epsilon$ is equal to~0 or~1 and~$c$ is an arbitrary constant.
\end{theorem}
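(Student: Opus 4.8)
The plan is to mirror the three-stage argument used for Theorem~\ref{thm:gu}: a preliminary classification reducing \eqref{eq:sys41b} to an ansatz depending on $g(u_1)$ and finitely many functions and constants, then the imposition of the higher integrability conditions to pin these down, then a normalisation by the admissible transformation group. For systems \eqref{eq:sys41b} the separant is $\uu_4=1$, so the explicit canonical densities of Section~\ref{sec:5} apply verbatim; moreover $C_0$ holds automatically, since $\uu_3=2g'(u_1)u_2=2D(g(u_1))$ is a total $x$-derivative. At every step the working tool is the one already used in Lemma~\ref{ansatz} and Remark~\ref{rem:5}: from $D_t\rho_i=D\sigma_i$ (or $\rho_i=D\pi_i$ for even $i$) one subtracts a suitable total $x$-derivative to lower the differential order of the left-hand side until the surviving top-order derivatives enter nonlinearly, and then equates their coefficients to zero; along the way one must retain the fluxes $\sigma_{2i-1}$ and potentials $\pi_{2i}$, since they re-enter the later conditions.

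\textbf{Step 1: reduction to an ansatz.} Starting from \eqref{eq:sys41b} with $g'(u_1)\neq0$, I would impose $C_1$, $C_2$, $C_3$ in succession. By analogy with Lemma~\ref{lem:class}, $C_1$ should constrain the $v_1$-nonlinearity $f_2$ and tie it to $g$; $C_2$ should fix the $v^2$-part of $f_0$ and the $v$-part of $f_1$ in terms of functions of fewer variables; $C_3$ should cut the $u_2$-dependence of the $f_i$ down to (at most) cubic and produce the first genuine constraint on $g'(u_1)$. Continuing with $C_4$, $C_5$, $C_6$ I expect the outcome to be a single ordinary differential equation for $g'(u_1)$ together with explicit formulas for $f_2,f_1,f_0$ in terms of $g$, finitely many functions of $u$ alone, and constants. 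Integrating the ODE for $g'$ should yield precisely the two branches visible in the statement: $g'(u_1)=-2/u_1$ (equivalently $g=-2\log u_1$), giving the systems that carry the term $-\tfrac{4}{u_1}u_2u_3$, and $g'(u_1)=-2u_1/(u_1^2+1)$ (equivalently $g=-\log(u_1^2+1)$), giving those that carry $-\tfrac{4u_1}{u_1^2+1}u_2u_3$; within the first branch there should be a secondary split according to whether the $v$-dependence of $f_1$ (the coefficient of a $vv_1$ term) vanishes.

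\textbf{Step 2: finishing the classification.} On each branch I would then impose $C_7$ through $C_{10}$, which collapse to algebraic (and, on one branch, differential) relations among the remaining unknowns. On the branch $g'=-2/u_1$ the remaining data are constants, and solving the resulting algebraic system leaves a finite list. On the branch $g'=-2u_1/(u_1^2+1)$ one cannot reduce all the freedom to constants: a function $s(u)$ survives, subject to $s'''(u)+4s(u)s'(u)=0$, which is exactly \eqref{eq:gux7}, while \eqref{eq:gux8} is the subcase in which the extra conditions force $s$ constant. I expect this to be the main obstacle: the branching tree is large, the intermediate expressions genuinely involve $v$ (unlike in the $g=g(u)$ case) and are correspondingly heavy, and one must take care neither to lose a case nor to introduce spurious ones — in particular the appearance of a residual functional freedom on one branch means the endpoint is not purely a system of equations over constants.

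\textbf{Step 3: normalisation.} Finally I would use the point transformations \eqref{eq:trsys} together with the scalings and affine shifts $t\to\alpha t$, $x\to\beta x+\gamma t$, $u\to\delta u+\kappa x+\lambda t$ to bring each surviving system to one of the normal forms \eqref{eq:gux1}--\eqref{eq:gux8}, and I would check that these normal forms are pairwise inequivalent, so that the list contains no redundancy. As in Theorem~\ref{thm:gu}, passing $C_0$--$C_{10}$ is only a necessary condition; the actual integrability of the listed systems is established separately, by exhibiting recursion operators in the manner of Section~\ref{sec:8}.
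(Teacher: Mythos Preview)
Your three-stage plan matches the paper's scheme, but the branching is organised the other way round, and your expectation of ``a single ordinary differential equation for $g'(u_1)$'' before splitting on the $vv_1$ coefficient is not how it goes. In the paper the first genuine bifurcation is on $f_{10}$, not on $g$: after pushing the ansatz through up to $C_8$ one obtains the factorised relations
\[
\bigl[f_{10}+2g'\bigr]\bigl[(f_{10})_{u_1}-g'f_{10}\bigr]=0,\qquad \bigl[f_{10}+2g'\bigr]\bigl[2(f_{10})_{u_1}+f_{10}^2\bigr]=0,
\]
so the primary dichotomy is $f_{10}=-2g'$ versus $f_{10}=0$. Only \emph{after} this split does an ODE for $g$ appear, and it is different in each branch: the branch $f_{10}=-2g'$ forces a third-order ODE whose general solution is $g=-\log\bigl(\gamma_0+(u_1+\gamma_1)^2\bigr)$, so both $g=-2\log u_1$ ($\gamma_0=0$) and $g=-\log(1+u_1^2)$ ($\gamma_0\neq0$) live here; the branch $f_{10}=0$ forces $2g''=(g')^2$, whose only solutions are $g=-2\log(u_1+\gamma_1)$. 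Consequently the $-\tfrac{4}{u_1}u_2u_3$ systems straddle \emph{both} $f_{10}$-branches: \eqref{eq:gux1}--\eqref{eq:gux4} come from $f_{10}=0$, while \eqref{eq:gux5}--\eqref{eq:gux6} come from $f_{10}=-2g'$ with $\gamma_0=0$. If you try to isolate a single ODE for $g$ before committing to an $f_{10}$-branch you will not find one, and you risk conflating cases with different $v$-structure.

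One further point on Step~3: the affine changes listed in the theorem statement are not enough by themselves to reach the normal forms. The paper's proof uses the full point transformations $u\to\varphi(u)$ of \eqref{eq:trsys} in an essential way --- for instance choosing $\varphi'=1/d(u)$ or $\varphi'=1/b(u)$ to set a surviving function of $u$ to~$1$, or choosing $\varphi$ with prescribed Schwarzian $2\varphi'''/\varphi'-3(\varphi'')^2/(\varphi')^2=c(u)$ to eliminate $c(u)$ altogether. Without these the intermediate list still carries arbitrary functions of $u$; only after absorbing them do the affine normalisations finish the job.
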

\begin{proof}[Scheme of the proof of Theorem~\ref{thm:gux}]
Formulas~\eqref{eq:f2} and~\eqref{eq:f1} remain valid in this case. Equation~\eqref{eq:f0a} becomes
\begin{multline*}
f_{0}(u,u_{1},u_{2},v)=
\tilde{f}_{0}(u,u_{1},u_{2})+\left[\left(\tfrac{1}{2}g_{u_1u_1}+\tfrac{1}{4}g_{u_1}f_{10}+\tfrac{1}{2}(f_{10})_{u_{1}}\right)u_2
+\tfrac{1}{2}(f_{10})_{u}u_{1}\right]v^{2}
\\
{}+\left[\tfrac{1}{2}g_{u_1}f_{11}u_{2}^{2}+\tfrac{1}{2}g_{u_1}au_{2}+2h_{u_{1}}u_{2}+2h_{u}u_{1}\right]v
\end{multline*}
and now
\begin{multline*}
    \tilde{f}_{0}(u,u_{1},u_{2})=f_{00}+u_{2}f_{01}+u_{2}^{2}f_{02}-\tfrac{1}{8}f_{10}^{2}u_{2}^{3}-\tfrac{1}{24}f_{11}^{2}u_{2}^{3}
    \\
    {}-\tfrac{1}{4}g_{u_1}f_{10}u_{2}^{3}+\tfrac{1}{2}\left(g_{u_1}\right)^{2}u_{2}^{3}+\tfrac{1}{2}g_{u_1u_1}u_{2}^{3}.
\end{multline*}
When $g_{u_1}(u_{1})\neq0$, as is the case, $C_{4}$ implies that $f_{11}(u,u_{1})=0$. Searching up to~$C_{8}$ we find the following two branch conditions:
\begin{gather*}
\left[f_{10}+2g_{u_1}\right]\left[(f_{10})_{u_{1}}-g_{u_1}f_{10}\right]=0,\\
\left[f_{10}+2g_{u_1}\right]\left[2(f_{10})_{u_{1}}+f_{10}^{2}\right]=0.    
\end{gather*}
\subsection{Case $f_{10}(u,u_1)+2g'(u_1)=0$} 
Condition~$C_1$ implies that~$f_{02}(u,u_1)=0$ and that 
the function~$g(u_1)$ must satisfy the equation  
\[g'''-3g'g''+(g')^2=0
\]
which means that, without loss of generality, we can set
\[g(u_1)=\log{\frac{1}{\gamma_0+(u_1+\gamma_1)^2} }.
\]
Conditions~$C_3$, $C_4$ imply that we can set
\[
h(u,u_1)=\frac14a(u,u_1).
\]
Now two subcases must be considered, $\gamma_0=0$ and $\gamma_0\neq0$.
\subsubsection{Subcase $\gamma_0=0$}
From $C_1$
\[a(u,u_1)=\alpha_0+\frac{b(u)}{u_1+\gamma_1}
\]
and from~$C_5$
\[
f_{01}(u,u_1)=c(u)(u_1+\gamma_1)^2-\frac{\alpha_0}4\frac{b(u)}{u_1+\gamma_1}-\frac3{16}\frac{b(u)^2}{(u_1+\gamma_1)^2}.
\]
Now~$C_1$ imposes that
\begin{multline*}
f_{00}\left(u,u_1\right)=-\frac{1}{6} c'(u) u_{1} \left(\gamma_1+u_{1}\right) \left(5 \gamma_1^2-6 \gamma_1 u_1-3 u_1^2\right)
\\
{}+\frac{1}{4} \alpha_0 b'(u) u_{1}+\frac{3 b(u) b'(u) u_{1}}{16 \left(\gamma_1+u_{1}\right)}.
\end{multline*}
Again a branch appears, with~$\gamma_1=0$ or~$\gamma_1\neq0$.\\

\textbf{When $\gamma_1=0$} all conditions up to $C_{10}$ are solved only if
\begin{equation}\label{eq:f01f100B}
    b(u)c'(u)+2b'(u)c(u)+2b'''(u)=0.
\end{equation}
The system has the form
\begin{multline*}\begin{aligned}
    u_t&=v\\
    v_t&=u_4+ \left(\alpha_0+\frac{b(u)}{u_1}+\frac{4 v}{u_1}\right)v_1-\frac{4 u_2}{u_1}u_3-\frac{3 u_2}{u_1^2}v^2 
    \\
    &\quad{}-\left(\frac{\alpha_0}{u_1}+\frac32\frac{b(u)}{u_1^2}\right)u_2v +\frac{1}{2} b'(u) v
    +\frac{3 u_2^3}{u_1^2}+\left(c(u) u_1^2-\frac{\alpha_0 b(u)}{4 u_1}-\frac{3 b(u)^2}{16 u_1^2}\right)u_2 
    \\
    &\quad{}+\frac{1}{16} \left(8 c(u)_{u} u_1^4+4 \alpha_0 b'(u) u_1+3 b(u) b'(u)\right)
    \end{aligned}
\end{multline*}
A Galilean transformation eliminates~$\alpha_0$. If~$b(u)=0$, a transformation~$\varphi(u)\to u$ with
\[2\frac{\varphi'''(u)}{\varphi'(u)}-3\frac{\varphi''(u)^2}{\varphi'(u)^2}=c(u),
\]
produces equation~\eqref{eq:gux6}. If~$b(u)\neq0$ a transformation~$\varphi(u)\to u$ with~$\varphi'(u)=1/b(u)$, condition~\eqref{eq:f01f100B} and appropriate scalings lead to equation~\eqref{eq:gux5}.  

\textbf{When $\gamma_1\neq0$} we have that $C_1$ implies that~$b(u)$ and~$c(u)$ are constant,
which means that the expression of the system has no function explicitly dependent on~$u$. Then we can perform a transformation~$u\to u-\gamma_1 x$ and set~$\gamma_1=0$, so we fall into the case studied immediately above.

\subsubsection{Subcase $\gamma_0\neq0$}
We normalize $\gamma_0=1$. From $C_{1}$ and $C_{3}$
\[a(u,u_1)=\alpha_0\frac{1-(u_1+\gamma_1)^{2}}{1+(u_1+\gamma_1)^{2}}+\alpha_1\frac{2\left(\gamma_1+u_{1}\right)}{1+(u_1+\gamma_1)^{2}}.
\]
Now from $C_{1}$ we see that
\begin{gather*}
f_{01}(u,u_{1})=\frac{1+(u_1+\gamma_1)^{2}}{2(u_1+\gamma_1)}\left(\sigma_{1}(u,u_1)\right)_{u_1},\\
f_{00}(u,u_{1})=\frac{1+(u_1+\gamma_1)^{2}}{2(u_1+\gamma_1)}\left(\sigma_{1}(u,u_1)\right)_{u}u_1    
\end{gather*}
being $\sigma_{1}(u,u_1)$ an arbitrary function. From $C_{3}$ a branching condition arises
\[\alpha_1\left(\sigma_{1}(u,u_1)\right)_{u}=0.
\]
\indent\textbf{When $\alpha_1=0$}, from conditions~$C_3$ and $C_{5}$
it follows that
\[
    \sigma_{1}(u,u_1)=-\frac{\alpha_0^{2}\left(1+3\left(u_1+\gamma_1\right)^{2}\right)}{4\left(1+\left(u_1+\gamma_1\right)^{2}\right)^{2}}+\frac{1}{2}\sigma(u)\left(\left(u_1+\gamma_1\right)^{2}-\frac13
    \right)+\tau_0
\]
and another branching condition
\[\gamma_1\cdot\sigma'(u)=0.
\]
\indent If $\gamma_1=0$, $C_{5}$ implies that
\begin{equation}
    \sigma'''+2\sigma\sigma'=0
\end{equation}
and all the conditions up to $C_{8}$ are satisfied. The resulting equation is
\begin{multline*}\begin{aligned}
    u_t&=v\\
    v_t&=u_{4}+\frac{4u_1}{u_1^{2}+1}vv_1-\frac{\alpha_0\left(u_1^{2}-1\right)}{u_1^{2}+1}v_1-\frac{4u_1}{u_1^{2}+1}u_2u_{3}
    \\
    &\quad{}-\frac{3u_1^{2}-1}{\left(u_1^{2}+1\right)^{2}}u_2v^{2}+\frac{\alpha_0u_1\left(u_1^{2}-3\right)}{\left(u_1^{2}+1\right)^{2}}u_2v+\frac{\left(3u_1^{2}-1\right)u_2^{3}}{\left(u_1^{2}+1\right)^{2}}
    \\
    &\quad{}+\frac{\alpha_0^{2}(3u_1^{2}-1)}{4\left(u_1^{2}+1\right)^{2}}u_2+\frac12\left(u_1^{2}+1\right)\sigma(u)u_2
    \\
    &\quad{}+\frac{1}{12}\sigma'(u)\left(u_1^{2}+1\right)\left(3u_1^{2}-1\right)
    \end{aligned}
\end{multline*}
that after Galilei and~$\sigma(u)\to2s(u)$ becomes
\begin{multline*}\begin{aligned}
    u_t&=v\\
    v_t&=u_{4}+\frac{4u_1}{u_1^{2}+1}vv_1-\frac{4u_1}{u_1^{2}+1}u_2u_{3}
    	-\frac{3u_1^{2}-1}{\left(u_1^{2}+1\right)^{2}}u_2v^{2}+\frac{\left(3u_1^{2}-1\right)u_2^{3}}{\left(u_1^{2}+1\right)^{2}}
    \\
    &\quad{}+\left(u_1^{2}+1\right)s(u)u_2
    	+\frac{1}{6}s'(u)\left(u_1^{2}+1\right)\left(3u_1^{2}-1\right)
    \end{aligned}
\end{multline*}
with~$\sigma'''+4\sigma\sigma'=0$. This is eq.~\eqref{eq:gux7}.
%


If $\gamma_1\neq0$, we have that~$\sigma(u)=\text{constant}$ and no more restrictions up to level~$C_8$. The resulting equation, with Galilei,  scaling and $u\to u+\alpha x$ transformations, can be reduced to~\eqref{eq:gux7} with constant~$s(u)$.

\indent\textbf{When $\alpha_1\neq0$}, we have that~$\sigma_1(u,u_1)=\sigma(u_1)$ and Galilei, scaling and $u\to u+\alpha x$ transformations lead to 
\begin{multline*}
\begin{aligned}
    u_t&=v\\
    v_t&=u_4+\frac{4u_1}{u_1^2+1}vv_1+\frac{2 \alpha_1 u_1}{u_1^2+1}v_1-\frac{4 u_1 u_2}{u_1^2+1}u_3 -\frac{\left(3 u_1^2-1\right) u_2}{\left(u_1^2+1\right)^2}v^2 
\\
&\quad{}-\frac{\alpha_1\left(3 u_1^2-1\right) u_2}{\left(u_1^2+1\right)^2} v 
+\frac{3 u_1^2-1}{\left(u_1^2+1\right)^2} u_2^3-\frac{\alpha_1^2 \left(3 u_1^2-1\right)}{4 \left(u_1^2+1\right)^2} u_2+\sigma_0 \left(u_1^2+1\right) u_2
\end{aligned}
\end{multline*}
Only scalings are available to simplify the equation now, and we obtain a subcase of~\eqref{eq:gux7} and eq.~\eqref{eq:gux8}.

\subsection{Case $f_{10}(u,u_1)+2g_{u_1}(u_1)\neq0$} We have that~$f_{10}(u,u_1)=0$. Conditions up to $C_6$ imply that~$f_{02}(u,u_1)=0$ and that~$g(u_1)$ must satisfy
\[2g''-(g')^2=0
\]
i.e.~we can take
\[
g(u_1)=-2\log(u_1+\gamma_1).
\]
Conditions up to~$C_7$ imply that
\begin{gather*}
    a(u,u_1)=\alpha_0\text{ constant},
\\
    h(u,u_1)=\frac{d(u)}{u_1+\gamma_1}.
\end{gather*}
The form of the system at this stage is
\[
\begin{aligned}
    u_t&=v\\
    v_t&=u_4+\alpha_0 v_1-\frac{4 u_3 u_2}{u_1+\gamma_1}+\frac{v^2 u_2}{\left(u_1+\gamma_1\right)^2}-\frac{\alpha_0 v u_2}{u_1+\gamma_1}-\frac{2 d(u) v u_2}{\left(u_1+\gamma_1\right)^2}
    \\
    &\quad{}+\frac{3 u_2^3}{\left(u_1+\gamma_1\right)^2}+\frac{2 d'(u) v u_1}{u_1+\gamma_1}+u_2 f_{01}\left(u,u_1\right)+f_{00}\left(u,u_1\right).
\end{aligned}
\]
Condition~$C_3$ implies that
\[
f_{01}\left(u,u_1\right)_{u_1u_1}-\frac{2 f_{01}\left(u,u_1\right)}{\left(u_1+\gamma_1\right)^2}-\frac{4 d(u)^2}{\left(u_1+\gamma_1\right)^4}=0
\]
so we write
\[
f_{01}(u,u_1)=c(u) \left(u_1+\gamma_1\right)^2+\frac{b(u)}{u_1+\gamma_1}+\frac{d(u)^2}{\left(u_1+\gamma_1\right)^2}.
\]
Besides we obtain
\[
f_{00}(u,u_1)=\frac{1}{2} c'(u) \left(u_1-\gamma_1\right) \left(u_1+\gamma_1\right)^3-\frac{1}{2} b'(u) \left(2u_1+\gamma_1\right)-d(u) d'(u)
\]
and the condition
\[
\gamma_1\cdot d'(u)=0.
\]
Supposing that~$\gamma_1\neq0$ implies that~$d(u)$ is constant and, by means of~$C_4$, that also~$b(u)$ and~$c(u)$ are constant. That makes the rhs of the system not explicitly dependent on~$u$, and thus a transformation~$u\to u-\gamma_1x$ can set~$\gamma_1$ to zero, without introducing explicit dependencies on~$x$. We can thus set~$\gamma_1=0$, yielding
\begin{gather*}
f_{01}(u,u_{1})=c(u)u_{1}^{2}+\frac{b(u)}{u_{1}}+\frac{d(u)^{2}}{u_{1}^{2}},\\
f_{00}(u,u_{1})=\frac{1}{2}c'(u)u_{1}^{4}-b'(u)u_{1}-d(u)d'(u).
\end{gather*}
Condition~$C_3$ implies a relation
\[
d(u)b'(u)-d'(u)b(u)=0
\]
opening two subcases, $d(u)=0$ and~$d(u)\neq0$.
\subsubsection{Case $d(u)\neq0$} We have $b(u)=\beta d(u)$ and
\[d(u)c'(u)+2d'(u)c(u)+2d'''(u)=0
\]
so we can write
\[c(u)=\frac{\gamma}{d^2(u)}-\frac{2d(u)d''(u)-(d'(u))^2}{d^2(u)}
\]
We obtain thus a system whose rhs depends on one arbitrary function~$d(u)$ and some arbitrary constants, and that passes all integrability conditions up to~$C_{10}$. A transformation~$\varphi(u)\to u$ with~$\varphi'(u)=1/d(u)$ can set~$d(u)=1$ yielding, after some scalings and Galilei, the system
\begin{equation}\label{eq:pregux123}
\begin{aligned}
u_t&=v\\
v_t&=u_{4}-\frac{4u_{3}u_{2}}{u_{1}}+\frac{v^{2}u_{2}}{u_{1}^{2}}-\frac{2vu_{2}}{u_{1}^{2}}+\frac{3u_{2}^{3}}{u_{1}^{2}}+\left(\kappa_0u_{1}^{2}+\frac{\kappa_1}{u_{1}}
+\frac{1}{u_{1}^{2}}\right)u_{2}
\end{aligned}
\end{equation}
Different normalizations and scalings produce equations~\eqref{eq:gux3} and~\eqref{eq:gux4}.

\subsubsection{Case $d(u)=0$} Condition~$C_5$ yields again the equation 
\[b(u)c'(u)+2b'(u)c(u)+2b'''(u)=0
\]
which can be solved, if~$b(u)\neq0$, in~$c(u)$. The resulting system can be again transformed through~$\varphi(u)\to u$ with~$\varphi'(u)=1/b(u)$ into~\eqref{eq:gux2}. If~$b(u)=0$, a Galilean transformation and~$\varphi(u)\to u$ with $2\frac{\varphi'''(u)}{\varphi'(u)}-3\frac{\varphi''(u)^{2}}{\varphi'(u)^{2}}=c(u)$~allows to get the system~\eqref{eq:gux1}.
\end{proof}
\begin{remark}
In~\cite{CH} a list of integrable Lagrangian systems with Lagrangian density $\mathscr{L}=\frac12L_2(u,u_1,u_2)u_t^2+L_1(u,u_1,u_2)u_t+L_0(u,u_1,u_2)$ was found. Comparing with that list, we find that~Eq.~\eqref{eq:gu3} is equivalent to the Lagrangian system~L3, Eq.~\eqref{eq:gu4} to~L2 and Eq.~\eqref{eq:gux7} to~L8.
\end{remark}

\section{Recursion, symplectic and Hamiltonian operators for the equations in Theorem~\protect\ref{thm:gu}}
\label{sec:8}

The systems found in Theorems~\ref{thm:gu} and~\ref{thm:gux} satisfy only a finite number of necessary conditions for integrability~\eqref{eq:cint}. But integrability requires the satisfaction of an infinite number of integrability conditions. To prove integrability we will determine  explicit recursion and symplectic operators for each of the listed systems. 

Recall that there are the following algebraic relations between operators~\cite{Dorf}. For any recursion~$\R$, symplectic~$\bS$ and~Hamiltonian~$\bH$ operators, the product $\bS\R$ is a symplectic operator,  the product~$\mathbf{R}\mathbf{H}$ is a Hamiltonian operator and the inverse~$\mathbf{S}^{-1}$ is a Hamiltonian operator (and viceversa, the inverse~$\bH^{-1}$ is a symplectic operator).

\textbf{Weakly nonlocal operators.}  
Almost all known recursion, symplectic and Hamiltonian operators for various integrable systems can be written in quasi-local (or weakly nonlocal) form.  
\begin{definition}
    A matrix pseudo-differential operator $\mathbf{A}$ is said to be written in a quasi-local form if
    \begin{equation} \label{sc}
        \mathbf{A}=\mathbf{D}+\sum_{i=1}^k\mathbf{B}_iD^{-1}\cdot \mathbf{C}_i^+
    \end{equation}
    where $\mathbf{D}$ is a differential operator and~$\mathbf{B}_i$, $\mathbf{C}_i$ are  rectangular matrices of differential functions of compatible size.
\end{definition}

The matrices ~$\mathbf{B}_i$, $\mathbf{C}_i$ are related \cite{sokkn,malnov,wang,demsok} to local symmetries and cosymmetries of the corresponding integrable system. Recall \cite{Olv93} that symmetries and cosymmetries are vector-columns $\bf b$ and $\bf c$, depending on jet variables, satisfying respectively the equations
\[
\mathbf{b}_t=\mathbf{\Phi}_*(\mathbf{b}),\qquad \mathbf{c}_t=-\mathbf{\Phi}_*^+(\mathbf{c}).
\]
Here the $t$-derivative is calculated in virtue of the system under consideration. 
A special type of cosymmetry is the variational derivative of a conserved density~$\rho$, that is of the form  
\[\mathbf{c}=\frac{\delta\rho}{\delta{\mathbf{u}}}=\begin{bmatrix}\delta\rho\big/\delta u\\
\delta\rho/\delta v
\end{bmatrix},
\]
where
\[\frac{\delta\rho}{\delta u}=\sum_{i\geq0}D^i\left(\frac{\partial \rho}{\partial u_i}\right),\quad \frac{\delta\rho}{\delta v}=\sum_{i\geq0}D^i\left(\frac{\partial \rho}{\partial v_i}\right).
\]

In the case of recursion operators, in formula~\eqref{sc} the columns of matrices 
$\mathbf{B}_i$ and~$\mathbf{C}_i$ are correspondingly symmetries and cosymmetries. For symplectic operators both $\mathbf{B}_i$ and~$\mathbf{C}_i$ consist of cosymmetries,  and for Hamiltonian operators both~$\mathbf{B}_i$ and~$\mathbf{C}_i$ are formed by symmetries.

Quasi-local anzats are practical because finding symmetries and cosymmetries of a system is relatively simple.  If the rhs of a system as well as its symmetries and cosymmetries are homogeneous polynomials, computations become almost trivial. 

In the simplest version of the ansatz \eqref{sc} the matrices $\mathbf{B}_i$ and $\mathbf{C}_i$ are column-vectors. We will call such operators {\it vector quasi-local operator}. In particular, 
vector quasi-local   recursion operator~$\mathbf{R}$ has the form
\begin{equation}\label{eq:quasilr}
    \mathbf{R}=\mathbf{D}+\sum_{i=1}^k\mathbf{s}_iD^{-1}\cdot \mathbf{c}_i^+,
\end{equation}
where $\mathbf{s}_i$ are  symmetries, and $\mathbf{c}_i$ are  cosymmetries of the integrable system. 

We will find vector quasi-local recursion and symplectic operator for Boussinesq type systems from Theorem \ref{thm:gu}. For each  of systems \eqref{eq:gu1}, \eqref{eq:gu2} and \eqref{eq:gu4} we have found two Hamiltonian operators and present a bi-Hamiltonian for of the system. For system \eqref{eq:gu3} we also found two Hamiltonian operators. However, we were able to write in the bi-Hamiltonian form a higher symmetry, but not the system itself.

\subsection*{System~\eqref{eq:gu2}} 
This system admits the symmetries $\mathbf{s}_1=\mc{1\\0}$, $\mathbf{s}_2=\mc{u_1\\v_1}$  and cosymmetries $\mathbf{c}_1=\mc{0 \\1}$, $\mathbf{c}_2=\mr{-v_1\\u_1}$, variational derivatives of conserved densities~$v$ and~$u_1v$ respectively. 
Using these symmetries and cosymmetries we can write recursion, symplectic and Hamiltonian operators $\R , \bS_1, \bS_2, \bH_1 $ and $\bH_2$ in quasi-local form as follows:
\[
\R=\mc{\frac{3}{4}v&D\\[1em]
 R_{21}& \frac34v}
 +\frac{1}{4}\begin{bmatrix}u_1\\v_1
\end{bmatrix}D^{-1}\cdot \begin{bmatrix}0 & 1
\end{bmatrix}
+\frac{1}{4}\mc{1\\0}D^{-1}\cdot\mc{-v_1 & u_1},
\]
where $R_{21}=D^{5}+\frac{5}{2}u_1D^{3}+\frac{15}{4}u_2D^{2}+\left(\frac{9}{4}u_3+u_1^{2}\right)+\frac{1}{2} u_4+u_1u_2$,
\[
\bS_1=\mr{0 & -1\\
1 & 0},\qquad\mathbf{H}_{1}=\mathbf{S}_{1}^{-1}=\mr{0 & 1\\
-1 & 0
}
\]
and
\begin{gather*}
    \bS_2=\bS_1\R
=
\mc{-R_{21}
& -\frac{3}{4}v\\[1.5em]
\frac{3}{4}v& D
}+\frac{1}{4}\mr{-v_1\\
u_1}D^{-1}\cdot \begin{bmatrix}0 & 1
\end{bmatrix}
+\frac{1}{4}\mc{0\\1}D^{-1}\cdot\mc{-v_1 & u_1},
\\
\mathbf{H}_{2}=-\R\bH_{1}=\begin{bmatrix}D & -\frac{3}{4}v\\[1em]
\frac{3}{4}v & -R_{21}
\end{bmatrix}
+\frac14\mc{u_1\\v_1}D^{-1}\cdot\mc{1&0}
+\frac14\mc{1\\0}D^{-1}\cdot\mc{u_1&v_1}.
\end{gather*}
The Hamiltonian operators $\bH_1$ and $\bH_2$ are compatible \cite{Olv93}. A bi-Hamiltonian pencil can be obtained from~$\mathbf{H}_{2}$ by shifting $v\to v+\text{const}$
and two corresponding Hamiltonian forms of the system are given by
\[
\begin{bmatrix}u\\
v
\end{bmatrix}_{t}=\bH_1\frac{\delta}{\delta\mathbf{u}}\left(\frac{1}{2}v^{2}-\frac{1}{2}u_{2}^{2}+\frac{1}{3}u_{1}^{3}\right)=\bH_2\frac{\delta}{\delta\mathbf{u}}\left(-2v\right).
\]

\subsection*{System~\eqref{eq:gu4}} This system admits the symmetry~$(1,0)$ and the conserved density~$\rho=v-u_1^2$, with variational derivative~$(2u_2,1)$, allowing us to write the quasi-local expression for a recursion operator
\[
\R=\mc{-u_1&0\\D^{3}+\left(2v-6u_{1}^{2}\right)D+v_{1} & 3u_1}
+\mc{1\\0}D^{-1}\cdot \mc{2u_2& 1}.
\]

Two symplectic operators are
\begin{equation}\label{eq:so51a}
\mathbf{S}_{1}=\begin{bmatrix}4u_1D+2u_2 & -1\\
1 & 0
\end{bmatrix}
\end{equation}
and
\begin{equation}\label{eq:so51b}
\mathbf{S}_{2}=\bS_1\R=\begin{bmatrix}-D^{3}+2(u_1^{2}-v)D+\left(2u_1u_2-v_1\right)& u_1\\
-u_1 & 0
\end{bmatrix}
+\begin{bmatrix}2u_2\\
1
\end{bmatrix}D^{-1}\cdot \begin{bmatrix} 2u_2& 1 \end{bmatrix}.
\end{equation}

Two Hamiltonian operators have the form
\[
    \mathbf{H}_{1}=\mathbf{S}_{1}^{-1}=\mc{\phantom{-}0 & 1\\-1 & 4u_{1}D+2u_{2}}
\]
and  
\[
\bH_2=-\R\mathbf{H}_1=\mc{0 & -3u_1\\3u_1 & -D^{3}-\left(2v+6u_{1}^{2}\right)D-v_{1}-6u_1u_2}
+\mc{1\\0}D^{-1}\cdot \mc{1& 0}
\]
These Hamiltonian operators are related by the argument shift $u\to u+\text{const}\, x$. We can write the system in bi-Hamiltonian form as
\[\begin{bmatrix}u\\
v
\end{bmatrix}_{t}=\bH_1\frac{\delta}{\delta\mathbf{u}}\left(\frac{1}{2}v^{2}-\frac{1}{2}u_{2}^{2}-\frac{1}{2}u_{1}^{4}\right)=\bH_2\frac{\delta}{\delta\mathbf{u}}\left(-u_{1}v+u_{\text{1}}^{3}\right).
\]
\subsection*{System~\eqref{eq:gu3}} The simplest symmetries and cosymmetries of this system are
\[\mathbf{s}_1=\mc{1\\0},\ \mathbf{s}_2=\mc{v\\v_t},\ \mathbf{s}_3=\mc{u_1\\v_1},\quad\mathbf{c}_1=\mc{-2u_2 \\1},\ \mathbf{c}_2=\mc{-4u_1u_2-2v_1\\2u_1},
\]
where the latter two are the variational derivatives of conserved densities~~$v+u_1^2$, and~$\frac23u_1^3+2u_1v$. 
Using those we find the quasi-local recursion operator
\begin{multline*}
\R=\mc{\frac{1}{2}u_{2}D+\frac{1}{2}u_{1}^{3}+\frac{3}{2}u_{1}v&D
\\[1em]
R_{21}
&\frac52u_2D+\frac12u_1^3+\frac32u_1v+2u_3
}
\\
{}+\frac{1}{2}
\begin{bmatrix}v\\u_4+ 2u_2 v+2u_1^2 u_2\end{bmatrix}D^{-1}\cdot \mc{-2u_2&1}
\\
{}+\frac{1}{2}\mc{u_1\\v_1}D^{-1}\cdot\mc{-2u_1u_2-v_1&2u_1}
.
\end{multline*}
with
\begin{multline*}
    R_{21}=D^5+\frac{5}{2}(u_{1}^{2}+v)D^{3}+(\frac{15}{2}u_{1}u_{2}+2v_{1})D^{2}
\\\textstyle
{}+(u_{4}+2u_{1}^{2}v+v^{2}+\frac{9}{2}u_{1}u_{3}+4u_{2}^{2}+\frac{1}{2}v_{2})D
\\\textstyle
{}+\frac{1}{2}u_{1}^{2}v_{1}+2u_{1}u_{2}v+\frac{1}{2}vv_{1}+2u_{1}^{3}u_{2}+u_{1}u_{4}
\end{multline*}
and the symplectic operator
\begin{multline*}
\bS=\mc{
S_{11}
&-D^2-\frac32u_1^2-\frac12v
\\
D^{2}+\frac{3}{2}u_{1}^{2}+\frac{1}{2}v&0}
\\
{}+\frac{1}{2}\mc{u_{1}\\v_1}D^{-1}\cdot\mc{-2u_2&1}+\frac{1}{2}\mc{1\\0}D^{-1}\cdot\mc{-2u_1u_2-2v_1&2u_1}
\end{multline*}
with
\[S_{11}=-u_1D^3-\frac32u_2D^2-\left(2u_1^3+2u_1v+\frac12u_3\right)D
-3u_1^2u_2-u_2v-u_1v_1.
\]
Two Hamiltonian operators are
\[
\bH_1=\begin{bmatrix}0 & u_{1}\\
-u_{1} & -D^{3}-2(u_{1}^{2}+v)D-2u_{1}u_{2}-v_{1}
\end{bmatrix}+\mc{1\\0}D^{-1}\mc{1&0},
\]
and
\[
\bH_2=\begin{bmatrix} -2 u_1 D-u_{2}& 
H_{12}
\\
H_{21} & H_{22}
\end{bmatrix}
+\mc{u_1\\v_1}D^{-1}\mc{v&v_t}+\mc{v\\v_t}D^{-1}\mc{u_1&v_1}
\]
with
\begin{gather*}
    \begin{split}
        H_{12}&=-2 D^4 -5 (u_1^2+v)D^2 -2 (5u_1 u_2+3 v_1)D 
\\
&\quad{}-u_1^4-v u_1^2-5 u_3 u_1-2 v^2-3 u_2^2-2 v_{2},
\\
    H_{21}&=-H_{12}-2v_1D-v_2,
\\
H_{22}&=2 u_{1} D^5 + 5 u_{2} D^4 + (4 u_{1}^3+2 v u_{1}+16 u_{3}) D^3
    \\
&\quad{}+ (18 u_{2} u_{1}^2+3 v_{1} u_{1}+3 v u_{2}+19 u_{4}) D^2
    \\
    &\quad{}+ (-4 v u_{1}^3+16 u_{3} u_{1}^2-4 v^2 u_{1}+8 u_{2}^2 u_{1}+v_{2} u_{1}
    -6 u_{2} v_{1}+7 v u_{3}+10 u_{5}) D
    \\
    &\quad{}-2 v_{1} u_{1}^3-6 v u_{2} u_{1}^2+5 u_{4} u_{1}^2-4 v v_{1} u_{1}
    +6 u_{2} u_3 u_{1}-2 u_{2}^3-2 v^2 u_{2}-v_{1} u_{3}
    \\
    &\quad{}-4 u_{2} v_{2}+3 v u_{4}+2 u_{6}.
    \end{split}
\end{gather*}
Using the first  Hamiltonian operator we can write the system in the Hamiltonian form: 
\[\begin{bmatrix}u\\
v
\end{bmatrix}_{t}=\mathbf{H}_{1}\frac{\delta}{\delta\mathbf{u}}\left(-u_{1}v-\frac{1}{3}u_{1}^{3}\right). 
\]
We cannot find any conserved density to express the system with the operator $\mathbf{H}_{2}$. However, possibly all higher symmetries of \eqref{eq:gu3} can be written in bi-Hamiltonian form. In particular, for the symmetry
\[
\mathbf{s}= \mc{v_{2}+\frac{1}{2}u_{2}^{2}+u_{1}^{2}v+\frac{1}{2}v^{2}+\frac{1}{6}u_{1}^{4}\\[4mm]
        \begin{gathered}\textstyle
            u_{6}+3u_{2}v_{2}+4u_{3}v_{1}+3vu_{4}+3u_{1}^{2}u_{4}+2u_{1}vv_{1}+\frac{2}{3}u_{1}^{3}v_{1}\\
    {}+12u_{1}u_{2}u_{3}+4u_{1}^{2}u_{2}v+2u_{2}v^{2}+2u_{1}^{4}u_{2}+4u_{2}^{3}
        \end{gathered}}
\]
we have 
\[
       \mathbf{s} =  \bH_2\frac{\delta\ }{\delta\mathbf{u}}\left(-\frac{v}2-\frac{u_1^2}{2}\right)
    =\bH_1\frac{\delta\ }{\delta\mathbf{u}}\left(u_{2} v_{1}+\frac{1}{2} u_{1} u_{2}^2-\frac{2}{3} vu_{1}^3-\frac{1}{2} v^2 u_{1}-\frac{u_{1}^5}{6}\right).
\]

\subsection{System~\eqref{eq:gu1}.}
In this case, writing the operators in quasi-local form requires symmetries and cosymmetries that explicitly depend on $x$ and $t$\footnote{When we expand the operators to pseudo-differential series this dependence disappears.}. Using the symmetries
\[
\mathbf{s}_1=\mc{1\\0},\ \mathbf{s}_2=\mc{x\\0},\ \mathbf{s}_3=\mc{t\\1},\ \mathbf{s}_4=\mc{xt\\x}
\]
and cosymmetries
\[\mathbf{c}_1=\mr{v_3\\-u_3},\ \mathbf{c}_2=\mc{2u_6+4u_2u_4+4u_3^2\\-2v_2},\ 
\mathbf{c}_3=\mc{-2 t u_{3}^2-2 v_2-2 t u_2 u_{4}-\frac{1}{2} x v_3-t u_{6}\\tv_{2}+\frac{1}{2}xu_{3}+u_{2}},
\]
the first two being variational derivatives of the conserved densities~
\[\rho_{1}=u_{2}v_{1},\qquad\rho_{2}=v_{1}^{2}-u_{3}^{2}+\frac{2}{3}u_{2}^{3},
\]
we find the following recursion operator:
\begin{multline*}
\R=\begin{bmatrix}
\frac{3}{4}v_{1} & D\\[2mm]
    D^{5}+\tfrac{5}{2}u_{2}D^{3}+\tfrac{5}{4}u_{3}D^{2}+\left(u_{4}+u_{2}^{2}\right)D
    -\tfrac{1}{2}u_{5}-u_{2}u_{3}
& \frac{3}{4}v_{1}
\end{bmatrix}
\\
+\frac{1}{4}\mc{x\\0}D^{-1}\cdot\mathbf{c}_1^++\frac{1}{4}\mc{t\\1}D^{-1}\cdot\mathbf{c}_2^+
+\frac{1}{2}\mc{1\\0}D^{-1}\cdot\mathbf{c}_3^+.
\end{multline*}
System~\eqref{eq:gu1} doesn't seem to possess any local higher symmetry. The recursion operator~$\R$ does not produce local higher symmetries from the classical symmetries. For example, appying~$\R$ to the $x$-shift symmetry~$(u_1,v_1)$ produces a nonlocal symmetry~$\mathbf{s}_{\text{nl}}$:
\begin{equation}\label{eq:nlsym}
\mathbf{s}_{\text{nl}}=\R\mc{u_1\\v_1}=\mc{v_2+D^{-1}(u_2v_1)\\
u_{6}+3u_{2}u_{4}+\frac{3}{2}u_{3}^{2}+\frac{1}{2}v_{1}^{2}+\frac{2}{3}u_{2}^{3}}.
\end{equation}
Two differential symplectic operators are
\[\bS_1=\mc{0 & -D^{2}\\
D^{2} & 0},\quad
\bS_2=\mc{-D^{7}-\frac{5}{2}u_{2}D^{5}-\frac{25}{4}u_{3}D^{4} & -\frac{3}{4}v_{1}D^{2}-v_{2}D-\frac{1}{4}v_{3}\\[0.5em]
\frac{3}{4}v_{1}D^{2}+\frac{1}{2}v_{2}D & D^{3}+\frac{1}{2}u_{2}D+\frac{1}{4}u_{3}
}.
\]
The recursion operator found is~$\R=\bS_1^{-1}\bS_2$.

A Hamiltonian operator is
\begin{multline}
\bH_1=\bS_1^{-1}=\mc{0&D^{-2}\\-D^{-2}&0}
\\
=
\begin{bmatrix}x\\0\end{bmatrix}D^{-1}\begin{bmatrix}t & 1\end{bmatrix}
+\begin{bmatrix}t\\1\end{bmatrix}D^{-1}\begin{bmatrix}x & 0\end{bmatrix}
-\begin{bmatrix}1\\0\end{bmatrix}D^{-1}\begin{bmatrix}xt & x\end{bmatrix}
-\begin{bmatrix}xt\\x\end{bmatrix}D^{-1}\begin{bmatrix}1 & 0\end{bmatrix}
\end{multline}
and the system can be written in Hamiltonian form as
\[\mc{u\\v}_t=\frac12\bH_1\frac{\delta\rho_2 }{\delta {\bf u}}.
\]
Another Hamiltonian operator is
\begin{multline*}
\bH_2=\R\bH_1
=\begin{bmatrix}0 & 0\\
0 & -D^{3}-\frac{5}{2}u_{2}D-\frac{5}{4}u_{3}
\end{bmatrix}-\frac{1}{4}\begin{bmatrix}x\\
0
\end{bmatrix}D^{-1}\begin{bmatrix}u_{1} & v_{1}\end{bmatrix}-\frac{1}{4}\begin{bmatrix}u_{1}\\
v_{1}
\end{bmatrix}D^{-1}\begin{bmatrix}x & 0\end{bmatrix}
\\
{}-\frac{1}{2}\begin{bmatrix}t\\
1
\end{bmatrix}D^{-1}\begin{bmatrix}v & v_{t}\end{bmatrix}
-\frac{1}{2}\begin{bmatrix}v\\
v_{t}
\end{bmatrix}D^{-1}\begin{bmatrix}t & 1\end{bmatrix}+\frac{1}{4}\begin{bmatrix}1\\
0
\end{bmatrix}D^{-1}\begin{bmatrix}xu_{1}+2tv & xv_{1}+2v+2tv_{t}\end{bmatrix}
\\
{}+\frac{1}{4}\begin{bmatrix}1+xu_{1}+2tv\\
xv_{1}+2v+2tv_{t}
\end{bmatrix}D^{-1}\begin{bmatrix}1 & 0\end{bmatrix}.
\end{multline*}
The given Hamiltonian operators are compatible since they are related by the shift $v\to v+\text{const}\,x$. Nevertheless, $\bH_2$ cannot be used to write a second Hamiltonian form of the system, in a local form. However, the non-local higher symmetry~$\mathbf{s}_{\text{nl}}$ (cf.~\eqref{eq:nlsym} \emph{can} be written in bi-Hamiltonian form as
\[\mathbf{s}_{\text{nl}}=\mathbf{H}_{1}\frac{\delta}{\delta\mathbf{u}}\left(\frac{1}{2}u_{4}^{2}-\frac{1}{2}v_{2}^{2}-\frac{3}{2}u_{2}u_{3}^{2}+\frac{1}{2}u_{2}v_{1}^{2}+\frac{1}{6}u_{2}^{4}\right)=\mathbf{H}_{2}
\frac{\delta\rho_{1}}{\delta\mathbf{u}}.
\]
Observe that the potentiation substitution~$u\to Du$, $v\to Dv$ transforms~\eqref{eq:gu2} into system~\eqref{eq:gu1}. This substitution transforms all local higher symmetries of~\eqref{eq:gu2} into non-local symmetries of\eqref{eq:gu1}, albeit preserving the local nature of conserved densities.

\section{Summary and concluding remarks}

We have explained a derivation of explicit integrability expressions for certain systems with a degenerate separant matrix, which we have defined as regularly diagonalisable system. These are systems that can be formally diagonalised after a diagonalisation procedure similar to the usual diagonalisation scheme \cite{MSY} for non-degenerate systems. 

As an application of the theory, using those explicit integrability conditions we provided a classification of two families of Boussinesq type systems admitting a finite number of integrability conditions. Even in these simplest cases some new examples arise. For the simplest family, we computed explicit expressions for recursion, symplectic and Hamiltonian operators, and gave a bi-Hamiltonian structure either for the system or for a higher symmetry (nonlocal in one case) proving the integrability of all the systems found. It is interesting that for writing some of these operators in quasi-local form, we had to resort to symmetries or cosymmetries explicitly dependent on the independent variables~$x$ and~$t$, although the systems do not have this dependence. Additionally, we had to use a cosymmetry which is not the variational derivative of a conserved density.

Many further computations can be done with the objects found in this work. A first development could be to perform a similar classification for the systems commented in Remarks~\ref{rem:sdx} and~\ref{rem:sysvux} and compare the result with those of the general polynomial case studied in~\cite{MNW,NW}. A study of the complete integrability of the systems given in~Theorem~\ref{thm:gux} would be also welcome. The question of why the general case with a complete~$g=g(u,u_1,u_2,v)$ (cf.~\eqref{eq:dc0}) is so difficult must be investigated. Most probably, the explanation is related with the existence of complicated differential substitutions relating different kinds (non-degenerate, regularly diagonalisable, etc.) of integrable systems
\[\begin{aligned}
    u_t&=f(u,u_1,\ldots,u_n;v,v_1,\ldots,v_m),\\
    v_t&=g(u,u_1,\ldots,u_n;v,v_1,\ldots,v_m).
\end{aligned}
\]
To systematically study these relations would constitute a whole fundamental project.


\section{Acknowledgements}

The authors want to thank V.S. Novikov for insightful comments and to A. Grishkov for his attention to the paper. RHH was supported by projects PID2021-124473NB-I00 and PID2019-106802GB-I00 from the Ministry of Science and Innovation of Spain, and is a member of the International Society of Nonlinear Mathematical Physics (ISNMP). VVS was supported by the FAPESR grant  2023/14060-7 (Brazil). He is grateful to IME USP for its hospitality during his visit to Sao Paulo. Both authors want to express their gratitude to the Institut des Hautes Études Scientifiques IHES (France)  for their hospitality and invaluable support.

\end{document}